\documentclass[a4paper,UKenglish,cleveref, autoref, thm-restate]{lipics-v2021}
\title{Disintegration Temporal Logic for Probabilistic
Hyperproperties} %TODO Please add

\author{Mishel Carelli}{CISPA Helmholtz Center for Information Security, Germany}{mishel.carelli@cispa.de}{https://orcid.org/0009-0000-2181-8205}{This
work was supported by the European Research Council (ERC) Grant HYPER
(No. 101055412).}

\author{Bernd Finkbeiner}{CISPA Helmholtz Center for Information Security, Germany,\\
Technical University of Munich, Germany}{finkbeiner@cispa.de}{https://orcid.org/0000-0002-4280-8441}{This
work was supported by the European Research Council (ERC) Grant HYPER
(No. 101055412).}%TODO mandatory, please use full name; only 1 author per \author macro; first two parameters are mandatory, other parameters can be empty. Please provide at least the name of the affiliation and the country. The full address is optional. Use additional curly braces to indicate the correct name splitting when the last name consists of multiple name parts.

\hideLIPIcs

\usepackage{tikz}
\usepackage{ltl}
\usepackage{mathtools}
\usepackage{stmaryrd}
\usepackage{amsmath}

\authorrunning{M. Carelli and B. Finkbeiner} %TODO mandatory. First: Use abbreviated first/middle names. Second (only in severe cases): Use first author plus 'et al.'

\Copyright{Mishel Carelli} %TODO mandatory, please use full first names. LIPIcs license is "CC-BY";  http://creativecommons.org/licenses/by/3.0/

\ccsdesc[500]{Theory of computation~Logic and verification} %TODO mandatory: Please choose ACM 2012 classifications from https://dl.acm.org/ccs/ccs_flat.cfm 

\keywords{Verification of Probabilistic Systems, Hyperproperties, Probabilistic Hyperproperties, Disintegration} %TODO mandatory; please add comma-separated list of keywords

\relatedversion{} %optional, e.g. full version hosted on arXiv, HAL, or other respository/website
\acknowledgements{}%optional

\makeatletter
\def\moverlay{\mathpalette\mov@rlay}
\def\mov@rlay#1#2{\leavevmode\vtop{%
		\baselineskip\z@skip \lineskiplimit-\maxdimen
		\ialign{\hfil$\m@th#1##$\hfil\cr#2\crcr}}}
\newcommand{\charfusion}[3][\mathord]{
	#1{\ifx#1\mathop\vphantom{#2}\fi
		\mathpalette\mov@rlay{#2\cr#3}
	}
	\ifx#1\mathop\expandafter\displaylimits\fi}
\makeatother

\renewcommand{\models}{\vDash}

\newcommand{\donotshow}[1]{}

\def\NN{\mathbb{N}}

\def\RR{\mathbb{R}}
\def\PP{\mathbb{P}}
\def\lim{{\varprojlim}}

\def\->{\rightarrow}

\nolinenumbers

\EventEditors{Sayan Bhattacharya, Danupon Nanongkai, Michael Benedikt, and Gabriele Puppis}
\EventNoEds{4}
\EventLongTitle{53rd International Colloquium on Automata, Languages, and Programming (ICALP 2026)}
\EventShortTitle{ICALP 2026}
\EventAcronym{ICALP}
\EventYear{2026}
\EventDate{July 7--10, 2026}
\EventLocation{Royal Holloway, University of London, Egham, United Kingdom}
\EventLogo{}
\SeriesVolume{374}
\ArticleNo{175}

\begin{document}

\maketitle

%TODO mandatory: add short abstract of the document
\begin{abstract}
We introduce \emph{Disintegration Temporal Logic} (DTL), a new probabilistic temporal logic that can express a wide range of probabilistic
hyperproperties, including probabilistic non-interference and perfect
indistinguishability. DTL is based on the notion of \emph{measure disintegration} from probability theory, which allows for conditioning probabilities on a finite or infinite sequence of events occurring during a program execution. This naturally supports reasoning about interacting stochastic systems, where complete executions of one component induce conditional probability distributions over another. We illustrate applications of DTL to systems interacting with stochastic environments, distributional properties of Markov decision processes, and probabilistic automata on infinite words, and discuss its relationship to existing probabilistic logics.

While model checking Markov chains against full DTL is undecidable, we identify two decidable fragments that capture many hyperproperties of interest. The linear fragment admits a polynomial-time model-checking procedure based on linear-algebraic techniques and captures probabilistic information-flow properties such as perfect indistinguishability and history-based probabilistic non-interference. The qualitative fragment admits an automata-theoretic model-checking procedure that extends the standard algorithm for $\mathit{HyperCTL}^*$ with reasoning about bottom strongly connected components.
\end{abstract}

\section{Introduction}

Hyperproperties are system properties that relate multiple execution traces~\cite{ClarksonS10}. They arise naturally in areas such as information-flow security, multi-agent systems, and counterfactual reasoning. There is substantial interest in finding expressive specification logics that capture a large set of relevant hyperproperties and yet have verification problems with acceptable computational cost (cf.~\cite{DBLP:journals/siglog/Finkbeiner23}).

For circuits and other non-probabilistic systems, simple variations of the standard temporal logics, such as HyperLTL~\cite{clarkson2014temporallogicshyperproperties}, the extension of linear-time temporal logic (LTL) with quantification over multiple traces, have proven to be powerful unifying frameworks. Such logics cover a large class of hyperproperties, including notions like non-interference, robustness, and counterfactual causality, that were previously studied in isolation. 
For probabilistic systems, the quest for general specification logics has turned out to be more difficult. A typical example of a probabilistic temporal logic is  HyperPCTL~\cite{abraham2018hyperpctltemporallogicprobabilistic}, which combines standard probabilistic reasoning from logics like PCTL~\cite{DBLP:books/daglib/0020348} with quantification over states. This allows for the specification of restricted versions of probabilistic non-interference, such as that an output distribution must not depend on the initial state. However, logics like HyperPCTL cannot refer to \emph{conditional probabilities}. As a result, such logics cannot express notions like \emph{probabilistic non-interference} in full generality, which is needed for systems that communicate reactively through low-security and high-security channels~\cite{10.5555/2699806.2699811}.

Consider a system that, at every step, receives a low-security input and a high-security input in variables $l$ and $h$ and, based on the previous trace and the current inputs, probabilistically determines the value of a binary output variable $o$. Probabilistic non-interference requires that for any low-level input trace $L$ of length~$n$, any high-level input trace $H$ of length~$n$, and any low-level output trace $O$ of length~$n$, such that the joint probability $\PP(H,L,O) > 0$, the probability of $o$ being true at step $n$ conditioned on $(H,L,O)$, is the same as the probability conditioned only on $(L,O)$. Intuitively, this property expresses that the high-level input trace does not provide additional information about the next output, once the low-level input and output history is fixed.

To express properties like general probabilistic non-interference in a temporal logic, the logic must support both temporal
reasoning and probabilistic reasoning with \emph{conditioning on sequences of events}.
In this paper, we introduce \emph{Disintegration Temporal Logic (DTL)}, which provides such a combination. DTL expresses general probabilistic non-interference as the following formula: 
\begin{equation} \label{nonint}
    \PP\!\Bigl(
  \LTLglobally \bigl(
    \PP(\LTLnext o \mid [h,l,o])
    =
    \PP(\LTLnext o \mid [l,o])
  \bigr)
\Bigr) = 1.
\end{equation}
Here, $\PP(\LTLnext o \mid [h,l,o])$ denotes the probability that $o$ is true at the
next step, conditioned on the history of $h$, $l$, and $o$. Analogously,
$\PP(\LTLnext o \mid [l,o])$ conditions only on the history of $l$ and $o$.
The globally operator $\LTLglobally$ has the usual meaning, that the subformula must hold at every point in time,  and
$\PP(\cdot)=1$ requires that the property holds for almost every trace (up to a set of measure zero). Combining these components, we obtain that the formula~(\ref{nonint}) holds for a system $M$ if the system satisfies probabilistic non-interference.

In addition to conditioning on 
\emph{finite} sequences of events, DTL also allows for conditioning on
\emph{infinite} sequences. This naturally arises in systems consisting of several interacting stochastic components. Consider a system interacting with an external stochastic environment modeled by a Markov chain. DTL can express properties such as \textit{With probability at least $0.1$, the environment produces an execution such that, conditioned on this execution, the probability that the system exhibits the undesirable behavior is greater than $0.4$.} This is expressed by the formula
$$M\models \PP\Bigl( \PP\bigl( \phi_{bad} \mid AP_{env}\bigr) > 0.4 \Bigr) > 0.1.$$

Here, $AP_{env}$ denotes the set of actions produced by the environment.
Note that we write $AP_{env}$ rather than $[AP_{env}]$, corresponding
to conditioning on the full infinite trace rather than on a finite prefix as in formula~(\ref{nonint}).

Beyond these motivating examples, DTL also naturally expresses distributional properties of Markov decision processes and properties of probabilistic automata on infinite words, establishing connections to existing probabilistic verification frameworks.

The mathematical foundation of DTL is the notion of \emph{measure disintegration}.
Disintegration allows us to define conditional probabilities even when conditioning
on events of probability zero.
This is essential in our setting, since individual infinite traces typically have
probability zero, yet we still want to condition on them. The problem of defining conditional probabilities in such cases dates back to
Kolmogorov~\cite{kolomogoroff2013grundbegriffe} and the Borel--Kolmogorov paradox,
which illustrates the subtlety of conditioning on null events.
Measure-theoretic disintegration, established by Rokhlin \cite{Rokhlin1949}, provides a principled resolution of this issue~\cite{book, Chang1997ConditioningAD}.

For full DTL, the model checking problem (determining whether a formula is satisfied by a given Markov
chain) is undecidable. 
However, we identify two decidable fragments that capture many hyperproperties of interest.

The \emph{linear fragment} allows comparing probabilities to $1$ as well as
checking the equality between probabilities.
This fragment can express conditional independence properties, including
probabilistic non-interference and perfect indistinguishability.
The linear fragment admits polynomial-time model checking based on
linear-algebraic reasoning, similar in spirit to algorithms for checking equivalence of labeled Markov chains~\cite{doi:10.1137/0221017}.

The \emph{qualitative fragment} can be viewed as a probabilistic counterpart of the
widely used hyperproperty logic HyperCTL$^*$~\cite{clarkson2014temporallogicshyperproperties}.
In this fragment, probabilities may only be compared to $0$ or $1$, yielding
``soft'' versions of existential and universal quantification.
The qualitative fragment can express, among others, properties of systems interacting with stochastic environments that compare probabilities only to $0$ and $1$. We present a model-checking algorithm for this fragment that combines reasoning about
bottom strongly-connected components with the standard HyperCTL$^*$ model-checking
procedure.
The resulting complexity is a tower of exponentials whose height is linear in the
alternation depth of the formula.

The remainder of the paper is organised as follows:
\begin{itemize}
    \item Section~2 presents preliminaries, introduces measure disintegration, and
    establishes key properties of disintegration on spaces of infinite traces.
    \item Section~3 defines the syntax and semantics of DTL and justifies the semantics
    using measure-theoretic arguments.
    \item Section~4 demonstrates how to express different properties in DTL and
    relates it to existing logics.
    \item Sections~5 and~6 introduce the linear and qualitative fragments and provide
    corresponding model-checking algorithms.
    %\item Section~7 proves the undecidability of general DTL model checking.
    \item Section~7 concludes.
\end{itemize}

\section{Preliminaries}

In this section, we introduce both well-known concepts and new notions that are used throughout the paper.

We begin with notations, followed by basic definitions from probability theory on the space of infinite words. We then introduce the fundamental concept of conditioning via measure disintegration, which forms the basis of our logic. Next, we describe our approach to conditioning on the space of infinite words through the notion of a \textit{cut}, and present several technical results that provide a constructive interpretation of the abstract concept of disintegration in the space of infinite words. Finally, we define Markov chains, Markov decision processes, and various types of automata.

\subsection{Notations}
Let \(\NN\), \(\RR\), and \(\RR_+\) denote the sets of natural numbers, real numbers, and non-negative real numbers, respectively. $\NN_{\omega}$, denotes the set $\NN \cup \{\omega \} 
$, with a linear order, extended from $\NN$, with $\omega>n$, for every $n\in \NN$.

Given an element $s\in X\times Y$, $s_X$ denotes the projection of $s$ on $ Y$.

Given a set $X$ and $n \in \NN$, we write $2^X$ for the set of all subsets of $X$,
$X^n$ for the set of sequences of elements of $X$ of length $n$,
$X^* = \bigcup_{n \in \NN} X^n$,
and $X^\omega$ for the set of all infinite sequences of elements of $X$,
called \emph{traces}. Given a subset $Y\subseteq X$, $\overline{Y}$ denotes the complement of $Y$.

Given a trace $\pi \in X^\omega \cup X^*$ and $n \in \NN$, let $\pi(n)$ denote the $n$-th element of $\pi$ if it exists, and let $\pi[n]$ denote the prefix of $\pi$ of length $\min\{n, |\pi|\}$.

Given $\pi \in X^*$ and $\rho \in X^\omega \cup X^*$, their concatenation is denoted
by $\pi \rho$.
For $Y \subseteq X^*$ and $Z \subseteq X^\omega \cup X^*$, we define their
concatenation element-wise by $YZ = \{ \pi \rho \mid \pi \in Y \text{ and } \rho \in Z \}.$ For $\pi \in X^*$ and $Z\subseteq X^\omega \cup  X^*$, denote $\pi Z = \{ \pi\} Z$.

\subsection{Probability measure on infinite words}
Given a set $X$, a \emph{$\sigma$-algebra} on $X$ is a non-empty collection
$\mathcal{X} \subseteq 2^X$ that is closed under complementation and countable
unions and intersections.
A \emph{measurable space} is a pair $(X,\mathcal{X})$, where $X$ is a set and
$\mathcal{X}$ is a $\sigma$-algebra on $X$.

Given a countable set $X$, the \emph{Cantor $\sigma$-algebra} on $X^\omega$ is the minimal $\sigma$-algebra containing the cylinder sets of the form $\mathsf{Cyl}(\pi) := \{ \pi \rho \mid \rho \in X^\omega \}$, where $\pi \in X^*.$ We always assume the Cantor $\sigma$-algebra on $X^\omega$ and omit mentioning it.

Let $(X,\mathcal{X})$ and $(Y,\mathcal{Y})$ be measurable spaces.
A function $f \colon X \to Y$ is \emph{measurable} if
$f^{-1}(B) \in \mathcal{X}$ for every $B \in \mathcal{Y}$. 
The \emph{product $\sigma$-algebra} on $X \times Y$ is the smallest
$\sigma$-algebra containing all rectangles of the form
$A \times B$ with $A \in \mathcal{X}$ and $B \in \mathcal{Y}$.
We denote it by $\mathcal{X} \otimes \mathcal{Y}$.

A \emph{finite measure} on $(X,\mathcal{X})$ is a function
$\mu \colon \mathcal{X} \to \RR_+$ such that $\mu(\emptyset) = 0$ and $\mu$ is \textit{countably additive}, i.e.
$
\mu (\bigcup_{n \in \NN} A_n) = \sum_{n \in \NN} \mu(A_n)
$
for every sequence of pairwise disjoint sets $A_1, A_2, \dots \in \mathcal{X}$.
A measure $\mu$ is called a \emph{probability measure} if $\mu(X) = 1$.

We denote by $\mathcal{M}(X,\mathcal{X})$ and $\mathcal{P}(X,\mathcal{X})$
the sets of finite measures and probability measures on $(X,\mathcal{X})$,
respectively.

Any countably additive function from the set of cylinder sets of $X^\omega$ to $[0;1]$, that maps $\emptyset$ to $0$, extends uniquely to a probability measure on $X^\omega$ \cite{DBLP:books/daglib/0020348}.

Given a measure $\mu$ on $(X\times Y , \mathcal{X}\otimes\mathcal{Y})$, define a marginal measure $\hat{\mu}_X$ on $(X,\mathcal{X})$ for any $B\in \mathcal{X}$ as $\hat{\mu}_X(B) = \mu(B\times Y)$.

A property on $X$ (for example, the equivalence of two functions or of two sets) is
said to hold \emph{almost everywhere} with respect to a measure $\mu$
(written $\mu$-a.e.), if the set of points where the property fails has
$\mu$-measure zero.

For the real-valued measurable function $f:X\to \RR$, measure $\mu$ and a measurable subset $A\subseteq X$ we denote an integral of $f$ with respect to $\mu$ over $A$ as $\int_A f(x) \mu(dx).$ For the definition of the integral and a more comprehensive introduction to measure theory, we refer to \cite{book, Tao2011AnIT}.

\subsection{Conditioning and Disintegration}

The standard definition of the probability of an event $A$ conditioned on an event $B$ is given by
$$ \PP(A\mid B) = \frac{\PP(A\cap B)}{\PP(B)}. $$
This formula is well-defined only when $\PP(B)\neq 0$. However, in many situations, we would like to condition on events of probability $0$.

Consider, for example, the uniform distribution on the unit square $[0,1]\times [0,1]$. Intuitively, if we condition on the first coordinate being equal to $0.3$, the second coordinate should remain uniformly distributed on $[0,1]$. However, the event that the first coordinate is exactly $0.3$ has probability $0$, so such a conditional distribution is not defined by the standard formula above. To address this issue, probability theory uses the notion of measure disintegration.

Quoting~\cite{book}: "Modern probability theory can be said to begin with the notions of conditioning and disintegration." 

Intuitively, if we have a measure on $X\times Y$, the disintegration of this measure is a well-behaved choice of conditional measures $\mu_x$ on $Y$, indexed by points $x\in X$, called \textit{kernel}.

Let $(X,\mathcal{X})$ and $(Y,\mathcal{Y})$ be measurable spaces.
Formally, a \emph{kernel} from $X$ to $Y$ is a function $\nu \colon X \to \mathcal{M}(Y,\mathcal{Y})$
such that for every $B \in \mathcal{Y}$, the map
$x \mapsto \nu(x)(B)$ is measurable.

Given a measure $\mu\in \mathcal{M}(X,\mathcal{X})$ and a kernel $\nu: X \to \mathcal{M}(Y,\mathcal{Y})$ their \textit{composition} measure $\mu\otimes \nu \in \mathcal{M}(X\times Y, \mathcal{X}\otimes\mathcal{Y})$ is given by 
$$(\mu\otimes \nu) (A\times B) = \int_A  \nu(x)(B) \mu(dx).$$

Intuitively, for each slice corresponding to a fixed value $x\in X$, the measure on $Y$ is given by $\nu(x)$, and these slice measures are then averaged according to $\mu$.

A measurable space $(X,\mathcal{X})$ is called \emph{standard Borel} if there
exists a metric on $X$ that makes a complete,
separable space, such that $\mathcal{X}$ is the associated
Borel $\sigma$-algebra.

We use the fact that $X^\omega$ with Cantor $\sigma$-algebra is standard Borel when $X$
is finite or countable~\cite{Perrin2004InfiniteW}, which suffices to apply the
following result from~\cite{book}.

\begin{theorem}[disintegration] \label{disintegration}
    Given a probability measure $ \mu \in \mathcal{P}(X\times Y, \mathcal{X}\otimes\mathcal{Y})$, where $(Y,\mathcal{Y})$ is standard Borel, there exists a kernel $\nu: X \to \mathcal{P}(Y,\mathcal{Y})$, such that $\mu = \hat\mu_X\otimes \nu$. Such a kernel is unique up to $\hat{\mu}_X$-a.e. equivalence.
\end{theorem}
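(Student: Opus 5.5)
The plan is to prove the theorem in two stages. First we settle the case $Y=[0,1]$ with its Borel $\sigma$-algebra, where $\nu$ can be built from conditional distribution functions. Then we transfer the result to a general standard Borel $Y$ through a Borel isomorphism onto a Borel subset of $[0,1]$; this step uses the standard Borel hypothesis essentially. Uniqueness is handled separately, using a countable generating $\pi$-system.

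\textbf{Step 1 ($Y=[0,1]$).} For each rational $q\in[0,1]$, the set function $A\mapsto \mu(A\times[0,q])$ is a finite measure on $(X,\mathcal{X})$ that is absolutely continuous with respect to $\hat\mu_X$. The Radon--Nikodym theorem gives a measurable density $F_q$ with $0\le F_q\le 1$. Because there are countably many rationals, there is a single $\hat\mu_X$-null set $N$ off which the following hold:
\begin{itemize}
\item $q\mapsto F_q(x)$ is nondecreasing in $q$;
\item $F_1(x)=1$;
\item $F_{q}(x)=\lim_{r\downarrow q,\, r\in\mathbb{Q}}F_r(x)$, which follows from monotone convergence applied to the integrals.
\end{itemize}
For $x\notin N$, extend $F_\cdot(x)$ right-continuously to all reals. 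This gives a distribution function, and we let $\nu(x)$ be its Lebesgue--Stieltjes probability measure. For $x\in N$, set $\nu(x)$ to be a fixed probability measure. The map $x\mapsto\nu(x)(B)$ is measurable when $B=[0,q]$, and a Dynkin $\pi$-$\lambda$ argument extends this to all Borel $B$. The same argument shows that $\int_A\nu(x)(B)\,\hat\mu_X(dx)=\mu(A\times B)$ for all rectangles. Rectangles form a $\pi$-system generating $\mathcal{X}\otimes\mathcal{Y}$, so $\mu=\hat\mu_X\otimes\nu$.

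\textbf{Step 2 (general standard Borel $Y$).} By Kuratowski's theorem, there is a Borel isomorphism $\phi\colon Y\to Z$ onto a Borel set $Z\subseteq[0,1]$. We push $\mu$ forward along $\mathrm{id}\times\phi$ and apply Step 1 to get a kernel $\tilde\nu$ into $[0,1]$. We have $\int \tilde\nu(x)(\overline{Z})\,\hat\mu_X(dx)=\mu(X\times\phi^{-1}(\overline Z))=0$, so $\tilde\nu(x)(Z)=1$ for $\hat\mu_X$-a.e.\ $x$. On that full-measure set, define $\nu(x)=\tilde\nu(x)\circ\phi$. Elsewhere, define $\nu(x)$ to be a fixed probability measure on $Y$. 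Measurability and the composition identity carry over through $\phi$.

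\textbf{Step 3 (uniqueness).} Suppose $\nu$ and $\nu'$ are two such kernels. Then $\int_A(\nu(x)(B)-\nu'(x)(B))\,\hat\mu_X(dx)=0$ for every $A\in\mathcal{X}$, so $\nu(x)(B)=\nu'(x)(B)$ a.e.\ for each fixed $B$. A standard Borel $\mathcal{Y}$ is countably generated by a $\pi$-system $\mathcal{C}$; in the case of the paper, $\mathcal{C}$ can be taken to be the cylinders of $X^\omega$. Taking a countable union of null sets, $\nu(x)$ and $\nu'(x)$ agree on all of $\mathcal{C}$ for a.e.\ $x$, and hence agree on all of $\mathcal{Y}$.

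The main obstacle is Step 1: choosing versions of the Radon--Nikodym derivatives consistently, so that for a.e.\ $x$ they form a genuine countably additive probability measure. The uncountably many exceptional null sets must be avoided by working only with countably many rational levels and using right-continuity. This is exactly where the structure of $Y$ is needed. In our application $Y=X^\omega$ with countable $X$, and one could alternatively bypass Step 2. To do this, define $\nu(x)$ on cylinders by Radon--Nikodym densities $F_\pi(x)$ for $\pi\in X^*$. Then arrange a.e.\ the finite additivity $F_\pi=\sum_{a}F_{\pi a}$. Countable additivity on cylinders then requires the compactness/tightness argument underlying the cylinder extension theorem cited earlier \cite{DBLP:books/daglib/0020348}, which is immediate when $X$ is finite.
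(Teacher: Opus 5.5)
Your proof is correct and follows essentially the standard route. The paper does not prove this theorem itself but imports it from \cite{book}, where the proof proceeds along the same lines as yours: Radon--Nikodym densities at rational levels assembled into conditional distribution functions on $[0,1]$, transfer to a general standard Borel $Y$ via a Borel isomorphism onto a Borel subset of $[0,1]$, and uniqueness from a countable generating $\pi$-system. In your closing aside, note that for countably infinite $X$ the a.e.\ identities $F_\pi=\sum_a F_{\pi a}$ (countable sums) already give a consistent family of finite-dimensional distributions, so the Ionescu-Tulcea or Kolmogorov extension applies without a separate compactness argument; in the paper's application the alphabet $2^{AP}$ is finite anyway.
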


We call such $\nu$ a disintegration kernel of $\mu$ with respect to $X$ and denote $\nu(x,\cdot)$ as $\mu_x$. Intuitively, $\mu_x$ represents the conditional distribution of $Y$ given $X = x$.

Returning to the example above, let $\mu$ be the uniform distribution on $[0,1]\times [0,1]$. Disintegrating $\mu$ with respect to the first coordinate yields that, for almost every $x\in [0,1]$, the measure $\mu_x$ is the uniform distribution on $[0,1]$. The qualification ``almost every'' is necessary because the disintegration kernel is unique only up to a set of measure zero.

\subsection{Cut and Disintegration in Infinite Words}
Fix a finite set of atomic propositions $AP$. We work with the space $(2^{AP})^\omega$ of infinite words over $2^{AP}$, where at each step every atomic proposition is either true or false.

Observe that there is a natural bijection $(2^{AP})^\omega \cong \prod_{a\in AP} 2^\omega,$. 
We will repeatedly use this correspondence. We also use the natural identification $ 2^n \times 2^\omega \cong 2^\omega, $ given by concatenation.

In this subsection, we consider disintegration in the space of infinite words. To disintegrate, we need to represent $(2^{AP})^\omega$ as a product of two spaces. For that purpose, we introduce a notion of a \textit{cut}.

Informally, a cut can be viewed as a generalized notion of a timestamp. It specifies which part of a trace is conditioned on. Intuitively, a cut separates a trace into a prefix, representing the past, and a suffix, representing the future; disintegration then yields a distribution over future behavior conditioned on the past.

Formally, a \emph{cut} is a function $C \colon AP \to \NN_\omega.$
Given a cut $C$, we define the sets $L(C)$ and $U(C)$ by
$$
L(C) = \prod_{a \in AP} 2^{C(a)},
\qquad
U(C) = \prod_{a \in AP; \ C(a) \neq \omega} 2^{\omega }.
$$

Intuitively, the value $C(a)$ specifies how much of the history of the proposition $a$ is revealed. The factor $2^{C(a)}$ represents the portion of the trace over $a$ that is conditioned on, while the corresponding factor $2^\omega$ in $U(C)$ represents the remaining unconstrained suffix. If $C(a)=\omega$, then the entire history of $a$ is fixed, and therefore no suffix component for $a$ appears in $U(C)$.
\begin{note}
        There is a natural bijection $L(C)\times U(C) \cong  (2^{AP})^\omega$.
\end{note}

Hence, we can view a measure $\mu \in \mathcal{P}((2^{AP})^\omega)$ as a probability
measure on $L(C) \times U(C)$. For each $T' \in L(C)$, we may then consider a
disintegration measure $\mu_{T'}$ as given by
\Cref{disintegration}.

\begin{example}\label{cutex}
Consider the set $AP = \{a_1,a_2\}$ and a cut $C$ such that
$C(a_1)=3$ and $C(a_2)=2$.
Then $L(C)=2^3\times 2^2$, and
$U(C)=2^\omega\times 2^\omega.$
An illustration is given in \Cref{cutfig}.
\end{example}

\begin{figure}[h]
\centering
\begin{tikzpicture}[
    x=0.9cm,
    y=1cm,
    every node/.style={font=\small}
]

% Labels
\node[left] at (0,2) {$a_1$};
\node[left] at (0,1) {$a_2$};

% Timelines
\draw[->] (0,2) -- (8,2);
\draw[->] (0,1) -- (8,1);

% Ticks a1
\foreach \x in {0,...,7}
{
    \draw (\x,2.1) -- (\x,1.9);
}

% Ticks a2
\foreach \x in {0,...,7}
{
    \draw (\x,1.1) -- (\x,0.9);
}

% Time labels
\foreach \x in {0,...,6}
{
    \node[below] at (\x,0.9) {\x};
}

% Conditioned parts
\fill[blue!20] (0,1.8) rectangle (3,2.2);
\fill[blue!20] (0,0.8) rectangle (2,1.2);

% Cut lines
\draw[ thick] (3,1.8) -- (3,2.2);
\draw[ thick] (2,0.8) -- (2,1.2);

% Labels
\node[blue, above] at (1.5,2.2) {conditioned};
\node[above] at (5.2,2.2) {future};

\end{tikzpicture}

\caption{
Illustration of a cut $C$ with
$C(a_1)=3$ and $C(a_2)=2$.
The shaded region represents the part of the trace that is conditioned on.
}
\label{cutfig}
\end{figure}

\subsection{Constructive disintegration}

In this subsection, we give a constructive interpretation of the abstract notion of disintegration on the space of infinite words. This result allows us to compute disintegration measures in certain cases and is later used to develop model-checking algorithms.

We first observe that disintegration with respect to a finite prefix coincides
with ordinary conditional probability.
Since the set of prefixes of a fixed finite length is finite, conditioning on a specific prefix amounts to conditioning on an event of positive probability.
In this case, the disintegration measure is given by the classical formula
$$
\PP(X \mid Y) = \frac{\PP(X \cap Y)}{\PP(Y)},
$$
whenever $\PP(Y) > 0$.

    \begin{lemma}\label{constructive_prefix}
Let $\mu \in \mathcal{P}((2^{AP})^\omega)$ and let
$C \colon AP \to \NN$ be a cut that takes only finite values.
Then for every measurable set $B \subseteq U(C)$, for
$\hat{\mu}_{L(C)}$-a.e. $T' \in L(C)$,
$$
\frac{\mu(\{T'\} \times B)}{\mu(\{T'\} \times U(C))}
\;=\;
\mu_{T'}(B),
$$
where the fraction is well-defined $\hat{\mu}_{L(C)}$-a.e.
\end{lemma}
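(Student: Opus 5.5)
Since every value of $C$ is finite, $L(C)=\prod_{a\in AP}2^{C(a)}$ is a finite set; I will give it the discrete $\sigma$-algebra. Then $\hat{\mu}_{L(C)}$ is a probability measure on a finite set, determined by its point masses $p(T') := \hat{\mu}_{L(C)}(\{T'\}) = \mu(\{T'\}\times U(C))$. A property holds $\hat{\mu}_{L(C)}$-a.e.\ exactly when it holds at every $T'$ with $p(T')>0$, because the complement of $S=\{T' : p(T')>0\}$ is a finite union of null points. So it suffices to prove the identity for each fixed $T'\in S$. On $S$ the denominator is positive, which gives the well-definedness claim.

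For the identity, I would apply \Cref{disintegration} with $X=L(C)$ and $Y=U(C)$. The space $U(C)$ is a finite product of copies of $2^\omega$, and via the identifications in the preceding subsection it is homeomorphic to a space of infinite words over a finite alphabet, so it is standard Borel. \Cref{disintegration} then gives a kernel $T'\mapsto\mu_{T'}$ with $\mu=\hat{\mu}_{L(C)}\otimes\nu$. Next I evaluate this equality on the measurable rectangle $\{T'\}\times B$, where $B\subseteq U(C)$ is measurable. By the definition of composition,
$$\mu(\{T'\}\times B)=\int_{\{T'\}}\mu_{x}(B)\,\hat{\mu}_{L(C)}(dx)=\mu_{T'}(B)\,p(T'),$$
since the integral over a single atom of a discrete measure is the integrand's value times the mass of the atom. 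Dividing by $p(T')>0$ yields the claimed equality for every $T'\in S$, and hence $\hat{\mu}_{L(C)}$-a.e.

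There is also a uniqueness angle. Define $\nu'(T')(B)=\mu(\{T'\}\times B)/p(T')$ for $T'\in S$, and let $\nu'(T')$ be an arbitrary fixed probability measure for $T'\notin S$. This $\nu'$ is a kernel, since measurability is trivial on a finite discrete space. The same computation shows that $\hat{\mu}_{L(C)}\otimes\nu'$ agrees with $\mu$ on rectangles, so the two measures agree everywhere by the $\pi$-$\lambda$ theorem. The uniqueness part of \Cref{disintegration} then says $\nu'=\nu$ a.e., which confirms the result independently of which version of the kernel is chosen. Note that the quantifier order in the statement (for every $B$, a.e.\ $T'$) is in fact weaker than what we get, because here the null set does not depend on $B$.

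The only subtle step is the bookkeeping of the identification $L(C)\times U(C)\cong(2^{AP})^\omega$. I need that $\{T'\}\times B$ corresponds to a measurable set and that the product $\sigma$-algebra matches the Cantor $\sigma$-algebra under this bijection. I would handle this by checking that cylinders correspond to finite unions of rectangles of the form $\{T'\}\times(\text{cylinder in } U(C))$, and conversely.
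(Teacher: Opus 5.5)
Your proposal is correct and follows essentially the same argument as the paper: evaluate $\mu=\hat{\mu}_{L(C)}\otimes\nu$ on the rectangle $\{T'\}\times B$ to get $\mu(\{T'\}\times B)=\mu(\{T'\}\times U(C))\cdot\mu_{T'}(B)$, then divide at the atoms of positive mass, whose complement is $\hat{\mu}_{L(C)}$-null because $L(C)$ is finite. Your extra uniqueness paragraph is harmless but unnecessary, since this computation already holds for any version of the kernel.
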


    \begin{proof}
        Let $T'\in L(C)$ be such that $\mu(\{T'\}\times U(C)) > 0$. By the definition of disintegration, for every measurable $B\subseteq U(C)$, we have
        $$\mu(\{ T'\}\times B)= \int_{T'' = T'} \mu_{T''}(B) \hat{\mu}_{L(C)}(dT'')  =$$ $$ = \mu(\{T'\}\times U(C)) \cdot \mu_{T'}(B).$$
        Since $L(C)$ is finite, the set of $T'$, such that $\mu(\{T'\}\times U(C)) = 0$ has a measure zero. Therefore, the stated equality is well-defined and holds $\hat{\mu}_{L(C)}$-a.e.
    \end{proof}

   Next, we show that disintegration with respect to a cut that assigns the value
$\omega$ can be obtained as the limit of disintegrations with respect to its
finite prefixes.

    \begin{theorem}\label{constructive_disintegration}
Let $\mu \in \mathcal{P}((2^{AP})^\omega)$ and let
$C \colon AP \to \NN_\omega$ be a cut.
For every measurable set $B \subseteq U(C)$, for
$\hat{\mu}_{L(C)}$-almost every $T' \in L(C)$,
$$
\frac{\mu(T'_n \times B)}{\mu(T'_n \times U(C))}
\;\xrightarrow[n \to \infty]{}\;
\mu_{T'}(B),
$$
where
$$
T'_n := \{\, T'' \in L(C) \mid T''[n] = T'[n] \,\}.
$$
Every fraction in the sequence is well-defined, and the limit exists
$\hat{\mu}_{L(C)}$-almost everywhere.
\end{theorem}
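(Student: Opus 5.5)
The plan is to recognise the sequence of fractions as a martingale with respect to a filtration on $L(C)$, and then apply Lévy's upward (martingale convergence) theorem. Fix a measurable $B \subseteq U(C)$ and write $\lambda := \hat{\mu}_{L(C)}$. For $n \in \NN$ let $\mathcal{F}_n$ be the $\sigma$-algebra on $L(C)$ generated by the map $T'' \mapsto T''[n]$, i.e. by the finitely many sets $T'_n$. Here $T''[n]$ is understood componentwise: the first $\min\{n, C(a)\}$ letters of each factor. The $\mathcal{F}_n$ form an increasing filtration, and they generate the whole product $\sigma$-algebra of $L(C)$, because cylinders on the $2^{C(a)}$ and $2^\omega$ factors are determined by finite prefixes. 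Put $f(T') := \mu_{T'}(B)$; by \Cref{disintegration} this is a bounded measurable function on $L(C)$.

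First, I would show that the $n$-th fraction equals $\mathbb{E}_\lambda[f \mid \mathcal{F}_n](T')$ wherever $\lambda(T'_n) > 0$. The defining identity $\mu = \lambda \otimes \nu$, applied to the rectangle $T'_n \times B$, gives
\[ \mu(T'_n \times B) = \int_{T'_n} \mu_{T''}(B)\,\lambda(dT''), \]
and also $\mu(T'_n \times U(C)) = \lambda(T'_n)$. Since $\mathcal{F}_n$ is generated by a finite partition into atoms $T'_n$, the conditional expectation on each atom of positive measure is exactly the average of $f$ over that atom, which is the fraction. This is the same computation as in \Cref{constructive_prefix}, now made uniform in $n$.

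Second, I would check well-definedness almost everywhere. For each $n$, the union of the atoms $T'_n$ with $\lambda(T'_n) = 0$ is a finite union of null sets, hence null. Taking the countable union over all $n$ is still null, so for $\lambda$-a.e. $T'$ every fraction in the sequence is defined.

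Third, I would apply Lévy's upward theorem. Since $f$ is bounded, $\mathbb{E}[f \mid \mathcal{F}_n] \to \mathbb{E}[f \mid \mathcal{F}_\infty]$ $\lambda$-a.e., where $\mathcal{F}_\infty = \sigma(\bigcup_n \mathcal{F}_n)$. As $\mathcal{F}_\infty$ is the full $\sigma$-algebra on $L(C)$ and $f$ is measurable with respect to it, the limit is $f = \mu_{T'}(B)$ a.e. Because the kernel is unique only up to $\lambda$-a.e. equivalence, the statement is insensitive to which version of $\mu_{T'}$ is chosen.

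The main obstacle is the measure-theoretic bookkeeping rather than any deep estimate. One must verify carefully that $\bigvee_n \mathcal{F}_n$ is indeed the product $\sigma$-algebra under the identification $L(C) \cong \prod_a 2^{C(a)}$, where the factors are mixtures of finite sets and Cantor spaces. One must also confirm that the martingale convergence theorem applies with respect to $\lambda$ on this standard Borel space. I would handle the first point by noting that the generating cylinders of each factor are finite-prefix events and products of these generate the product $\sigma$-algebra. If one prefers to avoid martingales, an alternative would be the Lebesgue differentiation theorem for the nested finite partitions $\{T'_n\}$, but the martingale route is the cleanest.
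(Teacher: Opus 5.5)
Your proposal is correct and takes essentially the same route as the paper, which likewise identifies each fraction with a finite-prefix conditional probability (via \Cref{constructive_prefix}), applies Kallenberg's Conditioning Limits Theorem (L\'evy's upward martingale theorem), and removes a countable union of null sets. Your bookkeeping is slightly cleaner than the paper's: you work with $f(T')=\mu_{T'}(B)$ on $L(C)$ and check explicitly that $\bigvee_n \mathcal{F}_n$ is the full product $\sigma$-algebra, so the limit is $f$ itself and need not be separately matched with the kernel.
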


    \begin{proof}
Fix a measurable set $B \subseteq U(C)$.
For each $n \in \NN$, the set $T'_n \subseteq L(C)$ is finite and measurable.
By Lemma~\ref{constructive_prefix}, for all $n$, for $\hat{\mu}_{L(C)}$-a.e. $T'\in L(C)$, we have

\begin{equation}
    \frac{\mu(T'_n \times B)}{\mu(T'_n \times U(C))}
=
\mu_{T'[n]}(B).
\label{eq_pref}
\end{equation}

By the Conditioning Limits Theorem~\cite[Theorem~9.24]{book}, since $\{ T' \}= \bigcap_{n\in \NN} T'_n $, we obtain
\begin{equation}
\label{eq_lim}
   \mu_{T'[n]}(B) \xrightarrow[n \to \infty]{} \mu_{T'}(B) 
\end{equation}
for $\hat{\mu}_{L(C)}$-almost every $T' \in L(C)$.
The set where \cref{eq_lim} does not hold or \cref{eq_pref} does not hold for some $n$ is a countable union of sets of measure zero, and hence has measure zero.
\end{proof}

\subsection{Modeling the system}
We consider labeled \emph{Markov Decision Processes}.
\begin{definition}
An \emph{MDP} is a tuple $(S, A, AP, l, R, i)$, where $S$, $A$, and $AP$ are finite
sets of states, actions, and atomic propositions, respectively,
$l \colon S \to 2^{AP}$ is a labeling function,
$i \in \mathcal{P}(S)$ is an initial distribution, and
$R \colon S \times A \to \mathcal{P}(S)$ is a transition function.
\end{definition}

A \emph{Markov chain} is an MDP 
$M = (S, A, AP, l, R, i)$ whose action set is a singleton, 
i.e.\ $A = \{\ast\}$. Since the single action is irrelevant, 
we write a Markov chain as $(S, AP, l, P, i)$, omitting $A$, 
where $P(s,s') := R(s,\ast)(s')$.

Given a Markov chain $M = (S, AP, l, P, i)$, we call the sequence of states $\pi \in S^*\cup S^\omega$ a \emph{path} and its corresponding labeling trace $l(\pi)\in {(2^{AP})}^*\cup (2^{AP})^\omega$ is given by $l(\pi)(i) = l(\pi(i)) $, for every $i<|\pi|$. $M$ induces a probability measure $\nu \in \mathcal{P}(S^\omega)$ on the set of infinite paths. For $n\in \NN$ and a finite path $p\in S^n$, the measure of the corresponding cylinder set is given by $$\nu(\pi \in S^\omega \mid \pi[n] = p) = i(p(0) ) \ \prod_{i=0}^{n-2} P(p(i),p(i+1)).$$

Moreover, $M$ induces a probability measure $\mu \in \mathcal{P} ((2^{AP})^\omega)$ on the set of infinite traces. For $n\in \NN$ and a finite trace $t\in (2^{AP})^n$, the measure of the cylinder set is given by $$\mu(T\in (2^{AP})^\omega \mid T[n] = t) = \nu(\pi \in S^\omega \mid l(\pi[n]) =t) = $$ $$ = \sum_{p\in S^n;l(p) = t} i(p(0)) \ \prod_{i=0}^{n-2} P(p(i),p(i+1)).$$

We call a Markov chain $M= (S, AP,l, R, i)$ \textit{precisely labeled}, if for every $s_1,s_2\in S$, if $l(s_1) = l(s_2)$, then $s_1 = s_2$, i.e., labeling function $l: S \to  2^{AP}$ is injective.

Throughout the paper, we use a construction in which one component, $M_1$, “blindly” schedules another component, $M_2$. Intuitively, $M_1$ probabilistically generates a blind scheduler for $M_2$. By \emph{blind} we mean that the scheduler does not depend on the states or transitions of $M_2$, but only on the step index. The probability of generating a particular scheduler is determined solely by $M_1$ and is independent of the behavior of $M_2$. This construction allows us to reason about the behavior of $M_2$ conditioned on the scheduling trace produced by $M_1$. We formalize this intuition using the notion of a \textit{cascade product} $M_1 \otimes M_2$.

\begin{definition}
    Given two MDPs $M_1 = (S_1, A, AP_1,l_1, R_1, i_1)$ and $M_2 = (S_2,2^{AP_1}, AP_2, l_2,R_2,i_2)$, we define their cascade product as $M_1 \otimes M_2 = (S_1\times S_2, A, AP_1\cup AP_2, l, R, i)$, where,
    $$l(s_1,s_2) = l_1(s_1) \cup l_2(s_2),$$
    $$R((s_1,s_2),a)(s_1',s_2') = R_1(s_1,a)(s_1')\cdot R_2(s_2,l_1(s_1'))(s_2'),$$
    $$i(s_1,s_2) = i_1(s_1)\cdot i_2(s_2). $$
\end{definition}

\begin{note}
    If $M_1$ is a Markov chain, then $M_1\otimes M_2$ is a Markov chain as well.
\end{note}

\subsection{Automata}

The algorithm in~\Cref{qualitative_section} uses several types of automata on infinite words.

\begin{definition}[Alternating B\"uchi automaton]\label{def:aba}
An \emph{alternating B\"uchi automaton} (ABA) is a tuple $\mathcal{A} = (Q,\Sigma,\delta,q_0,F),$
where $Q$ is a finite set of states, $\Sigma$ is a finite alphabet,
$q_0 \in B^+(Q)$ is the initial state formula, $F \subseteq Q$ is the set of accepting
states, and the transition function is $\delta \colon Q \times \Sigma \to \mathsf{B}^+(Q),$
where $\mathsf{B}^+(Q)$ is the set of positive Boolean combinations of states. 

Let $w \in \Sigma^\omega$. A \emph{run} of $\mathcal{A}$ on $w$ is a (possibly
infinitely branching) $Q$-labeled tree $r \subseteq \NN^*$ such that:
\begin{itemize}
\item  the set of labels of the children of the root satisfies $q_0$;
\item for every node $u$ of depth $i+1$ labeled by $q$, the set of labels of
the children of $u$ satisfies the formula $\delta(q,w(i))$.
\end{itemize}
A run is \emph{accepting} if every infinite branch visits $F$ infinitely often.
The language $\mathcal{L}(\mathcal{A}) \subseteq \Sigma^\omega$ is the set of
traces that admit an accepting run.
\end{definition}

We call an ABA \emph{existential} (\emph{universal}) if every formula
$\delta(q,\sigma)$ and the initial formula $q_0$ use only disjunctions
( conjunctions).
In this case we may write $\delta(q,\sigma)=Q'$ or $q_0=Q'$ for some
$Q'\subseteq Q$, meaning $\bigvee_{q'\in Q'} q'$ (
$\bigwedge_{q'\in Q'} q'$).

A \emph{reachability} (\emph{safety}) alternating automaton is
defined as ABA above, except that a run is accepting if every
infinite branch that visits $F$ \emph{at least once} (if every
infinite branch \emph{stays in $F$ forever}).

\begin{definition}
A \emph{deterministic Rabin automaton} (DRA) is a tuple $\mathcal{D} = (Q,\Sigma,\delta,q_0,\mathsf{Acc})$, where $Q$ is a finite set of states, $q_0\in Q$ is the initial state, $\delta \colon Q \times \Sigma \to Q$ is a total transition function, and $\mathsf{Acc} = \{(L_i,K_i) \mid i\in I\}, $, where $ L_i,K_i \subseteq Q $,
is a (finite) Rabin acceptance condition.
For $w\in\Sigma^\omega$, the unique run $\rho\in Q^\omega$ is given by
$\rho(n+1)=\delta(\rho(n),w(n))$. It is \emph{accepting} if there exists
$(L_i,K_i)\in\mathsf{Acc}$ such that $\rho$ visits $L_i$ only finitely many times and $K_i$ infinitely many times.
\end{definition}

\section{DTL}

\subsection{Syntax}

Let $AP$ be a finite set of atomic propositions.
The formulas of \emph{Disintegration Temporal Logic} are given by the following grammar, where $\phi$ denotes
state formulas and $\psi$ denotes quantitative formulas:
\begin{align*}
\phi &\coloneqq a \mid \neg \phi \mid \phi \land \phi \mid \psi \leq r
       \mid \LTLnext \phi \mid \phi \LTLuntil \phi, \\
\psi &\coloneqq r \cdot \psi \mid \psi + \psi \mid
       \PP(\phi \mid [A], B).
\end{align*}
Here $a\in AP$ and $A,B\subseteq AP$, and $r\in \RR$.

We use additional temporal operators $\LTLglobally \phi = \neg \LTLeventually \neg \phi$; $\LTLeventually \phi= true \LTLuntil \phi;$ $\phi_1 \LTLr \phi_2 = \neg(\neg \phi_1 \LTLuntil \neg \phi_2)$. Moreover, $\PP(\phi) = \PP(\phi \mid [\emptyset], \emptyset)$.

The \emph{scope} of an occurrence of a probabilistic operator 
$\PP(\phi \mid [A], B)$ is the subformula $\phi$. 
All occurrences of atomic propositions or temporal operators in the scope of a probabilistic operator are said 
to be \emph{bound}. Note that the occurrences of 
atomic propositions from $A \cup B$ in the conditional part 
$[A], B$ are not bound by this operator.

A formula $\phi$ is called \emph{closed} if every occurrence of an atomic proposition or a temporal operator is bound by some probabilistic operator.

\subsection{Semantics}

\subsubsection{Informal explanation.}
The main feature of the logic is the conditional probability operator.
A formula is evaluated relative to a trace $T$, a timestamp $n$, a cut $C$, and
a probability measure $\mu$.

The timestamp represents the current position in the trace; temporal operators
update the timestamp. The cut specifies which part of the trace is conditioned
on, namely the prefix component $L(C)$.

The formula $\PP(\phi \mid [A], B)$ updates the cut from $C$ to $C'$ by assigning
elements of $A$ the current timestamp $n$ and elements of $B$ the value
$\omega$. Intuitively, this corresponds to conditioning on the behavior of
atomic propositions in $A$ up to the current step and on the entire behavior of
atomic propositions in $B$. The operator $\PP(\phi \mid [A], B)$ is interpreted
quantitatively and denotes the probability of satisfying $\phi$, conditioning on $T_{L(C')}$.

\subsubsection{Semantics}

We now define the semantics formally.
For every cut $C$, we fix a disintegration kernel of $\mu$ with respect to
$L(C)$. Later, we show that the semantics do not depend on the particular
choice of this kernel.

Given a trace $T \in (2^{AP})^\omega$, a timestamp $n \in \NN$, a cut
$C \colon AP \to \NN_\omega$, and a measure
$\mu \in \mathcal{P}((2^{AP})^\omega)$, the satisfaction relation
$T,C,n \models_\mu \phi$ is defined inductively as follows:
\begin{align*}
T,C,n &\models_\mu a\text{ iff } a\in T(n), \\
T,C,n &\models_\mu \neg \phi \text{ iff } T,C,n \not\models_\mu\phi, \\
T,C,n &\models_\mu \phi_1 \wedge \phi_2 \text{ iff } T,C,n \models_\mu\phi_1 \text{ and } T,C,n \models_\mu\phi_2, \\
T,C,n &\models_\mu \psi \leq r\text{ iff }  \llbracket T,C,n,\psi \rrbracket_\mu \leq r, \\
T,C,n &\models_\mu \LTLnext \phi \text{ iff } T,C,n+1 \models_\mu \phi, \\
T,C,n &\models_\mu \phi_1 \LTLuntil \phi_2 \text{ iff } \exists i. \ T,C,n+i \models_\mu\phi_2 \text{ and } \forall j<i. \ T,C,n+j \models_\mu\phi_1.
\end{align*}

The formula \(\psi\) is interpreted quantitatively over a trace $T$, a timestamp $n\in \NN$, a cut $C $ and a measure $\mu$, with its semantics taking values in \(\mathbb{R}\).
\begin{align*}
\llbracket T,C,n, r\cdot \psi \rrbracket_\mu &= r \cdot \llbracket T,C,n,\psi \rrbracket_\mu, \\ 
\llbracket T,C,n, \psi_1 + \psi_2 \rrbracket_\mu &= \llbracket T,C,n,  \psi_1 \rrbracket_\mu + \llbracket T,C,n,  \psi_2 \rrbracket_\mu, \\
\llbracket T,C,n, \PP    (\phi \mid [A],B) \rrbracket_\mu &= \mu_{T_{L(C')}}   ( t\in U(C') \mid (T_{L(C')},t),C',n \models_\mu\phi ).
\end{align*}
In the last line, the cut $C':AP\to \NN_\omega$ is defined for $p\in AP$ by
\begin{equation}
    C'(p) = \begin{cases}
    C(p), &\text{ for }p\notin A\cup B,\\
    max(C(p),n),  & \ \text{for } p\in A,\\
    \omega, &\text{ for } p\in B.
\end{cases}
\label{eq_s'}
\end{equation}

Here, $T_{L(C')}$ denotes the projection of
$T \in (2^{AP})^\omega \cong L(C') \times U(C')$ onto $L(C')$, and
$\mu_{T_{L(C')}}$ is the corresponding disintegration measure obtained from the fixed disintegration kernel.

\begin{note}
Disintegration measures for different prefixes must be obtained from the same
kernel. Otherwise, inconsistent choices may violate the measurability of the
semantic interpretation.
\end{note}

\begin{example}
Consider the cut $C$ with $C(a_1)=3$ and $C(a_2) = 2$ from \Cref{cutex}. Suppose we evaluate the formula $\llbracket T,C,5,\PP(\phi \mid [a_1],a_2)\rrbracket_\mu.$
The cut must then be updated according to \Cref{eq_s'}. For the resulting cut $C'$, we have $C'(a_1)=5,$
and $C'(a_2)=\omega.$
Indeed, the formula requires conditioning on the history of $a_1$ up to the current time and on the entire history of $a_2$.
\end{example}

\begin{figure}[h]
\centering
\begin{tikzpicture}[
    x=0.9cm,
    y=1cm,
    every node/.style={font=\small}
]

% Labels
\node[left] at (0,2) {$a_1$};
\node[left] at (0,1) {$a_2$};

% Timelines
\draw[->] (0,2) -- (8,2);
\draw[->] (0,1) -- (8,1);

% Ticks a1
\foreach \x in {0,...,7}
{
    \draw (\x,2.1) -- (\x,1.9);
}

% Ticks a2
\foreach \x in {0,...,7}
{
    \draw (\x,1.1) -- (\x,0.9);
}

% Time labels
\foreach \x in {0,...,6}
{
    \node[below] at (\x,0.9) {\x};
}

% Conditioned parts
\fill[blue!20] (0,1.8) rectangle (3,2.2);
\fill[blue!20] (0,0.8) rectangle (2,1.2);

% Newly conditioned part after update (red)

% a1 : extend from 3 to 5
\fill[red!30] (3,1.8) rectangle (5,2.2);

% a2 : extend from 2 to infinity (depicted by the visible remainder)
\fill[red!30] (2,0.8) rectangle (8,1.2);

% Original cut lines
\draw[ thick] (3,1.8) -- (3,2.2);
\draw[ thick] (2,0.8) -- (2,1.2);

% Updated cut lines
\draw[ thick, red] (5,1.8) -- (5,2.2);

% For a2 the updated cut is omega:
% draw a red arrow indicating continuation to infinity
\draw[red, thick, ->] (7.2,1) -- (8.2,1);
%\node[black, above] at (7.6,1.15) {$\omega$};

% Labels

\node[] at (6.2,2.55) {future};

\node[blue!70!black] at (1.5,2.55) {original condition };
\node[red!70!black] at (4.0,2.55) {updated};

\end{tikzpicture}

\caption{
Illustration of a cut update. The blue region represents the original cut $C$ with $C(a_1)=3$ and $C(a_2)=2$. The red region shows the additional portion required after evaluation of the formula, yielding the updated cut $C'$ with $C'(a_1)=5$ and $C'(a_2)=\omega$.
}
\end{figure}

Given a Markov chain $M$ inducing a measure $\mu$ and a closed DTL formula $\phi$,
we define
$$
M \models \phi
\quad\text{iff}\quad
T, C_0, 0 \models_\mu \phi,
$$
where $C_0(p) = 0$ for all $p \in AP$ and $T$ is an arbitrary trace.
Since $\phi$ is closed, the evaluation does not depend on the choice of $T$.

The \emph{DTL model-checking problem} is to decide, given a Markov chain $M$
and a closed DTL formula $\phi$, whether
$M \models \phi$ holds.

\subsubsection{Correctness and Undecidability}
Recall that disintegration kernels are defined uniquely up to a.e.
equivalence. The following theorem formalizes the correctness of the semantics
and shows that the semantic evaluation is independent of the particular choice
of disintegration kernels.

\begin{theorem}\label{thm:semantic_correctness}
Given DTL formulas $\phi$ and $\psi$, defined by the grammar above, a cut $C \colon AP \to \NN_\omega$, a timestamp
$n \in \NN$, and a probability measure $\mu$, the following statements hold.
\begin{enumerate}
\item
The function
$f \colon (2^{AP})^\omega \to \RR$ defined by
$$
f(T) = \llbracket T, C, n, \psi \rrbracket_\mu
$$
is measurable. Moreover, different choices of disintegration kernels yield
$\mu$-a.e. equivalent functions.

\item
The set
$$
\{\, T \in (2^{AP})^\omega \mid T, C, n \models_\mu \phi \,\}
$$
is measurable. Moreover, different choices of disintegration kernels yield
$\mu$-a.e. equivalent sets.
\end{enumerate}
\end{theorem}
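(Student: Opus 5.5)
We argue by simultaneous structural induction on $\phi$ and $\psi$. The induction hypothesis is the conjunction of both statements and must hold for \emph{all} cuts $C$ and timestamps $n$ at once, because the probabilistic operator changes the cut and the temporal operators change the timestamp. For the measurability part we fix one family of kernels (one per cut). For the a.e.-independence part we compare two families $\{\mu^1_{\cdot}\}$ and $\{\mu^2_{\cdot}\}$, and write $f^1,f^2$ and $S^1,S^2$ for the resulting functions and sets.

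The Boolean and temporal cases are routine. For an atomic proposition, $\{T \mid a\in T(n)\}$ is a finite union of cylinders and does not depend on the kernels. Negation and conjunction preserve measurability, and a.e.-equivalence is preserved because symmetric differences satisfy $(S\cap R)\triangle(S'\cap R')\subseteq (S\triangle S')\cup(R\triangle R')$. For $\LTLnext\phi$ we apply the hypothesis at $(C,n+1)$. For $\phi_1\LTLuntil\phi_2$ the set equals $\bigcup_i\bigl(S_{\phi_2}^{n+i}\cap\bigcap_{j<i}S_{\phi_1}^{n+j}\bigr)$, which is countable. It is therefore measurable, and the symmetric difference between two kernel choices lies in a countable union of null sets. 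Linear combinations of measurable functions are measurable, and a.e.-equalities are preserved under them. The case $\psi\le r$ is the preimage of $(-\infty,r]$ under a measurable function, and $f^1=f^2$ a.e.\ gives $S^1\triangle S^2\subseteq\{f^1\neq f^2\}$.

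The main case is $\psi=\PP(\phi\mid[A],B)$. Let $C'$ be the updated cut from \cref{eq_s'} and identify $(2^{AP})^\omega\cong L(C')\times U(C')$. By induction, $E:=\{T\mid T,C',n\models_\mu\phi\}$ is measurable. The value is $f(T)=\mu_{T_{L(C')}}(E_{T_{L(C')}})$, where $E_x=\{t\mid (x,t)\in E\}$ is a section. The key lemma is that for a kernel $\nu$ and a measurable $E\subseteq X\times Y$, the map $x\mapsto\nu(x)(E_x)$ is measurable. This holds for rectangles $A\times B$, since there it equals $\mathbf 1_A(x)\,\nu(x)(B)$. The class of sets for which it holds is closed under proper differences and increasing unions, so the $\pi$–$\lambda$ theorem extends it to all of $\mathcal X\otimes\mathcal Y$. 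Composing with the measurable projection $T\mapsto T_{L(C')}$ then gives measurability of $f$.

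Independence is the step I expect to be the main obstacle, because changing the kernel changes two things simultaneously: the outer measures $\mu_x$ and the inner set $E$, which also depends on the kernels. We use the triangle $\mu^1_x(E^1_x)$, $\mu^1_x(E^2_x)$, $\mu^2_x(E^2_x)$. For the first comparison, $E^1\triangle E^2$ is $\mu$-null by induction. Since $\mu=\hat\mu_{L(C')}\otimes\mu^1$, the generalized composition formula (obtained by the same $\pi$–$\lambda$ argument) gives
\[
0=\mu(E^1\triangle E^2)=\int \mu^1_x\bigl((E^1\triangle E^2)_x\bigr)\,\hat\mu_{L(C')}(dx).
\]
Hence $\mu^1_x(E^1_x)=\mu^1_x(E^2_x)$ for $\hat\mu_{L(C')}$-a.e.\ $x$. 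For the second comparison the set is fixed and only the kernel varies, so we need $\mu^1_x(G_x)=\mu^2_x(G_x)$ a.e.\ for a fixed measurable $G$. By \Cref{disintegration}, $\mu^1_x=\mu^2_x$ for a.e.\ $x$ as measures on $U(C')$: since $U(C')$ is standard Borel with a countable generating $\pi$-system of cylinders, equality on that system holds off a single null set and then extends. Finally, a $\hat\mu_{L(C')}$-null set of $x$ pulls back under the projection to a $\mu$-null set of traces $T$, so $f^1=f^2$ holds $\mu$-a.e.
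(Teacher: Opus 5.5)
Your proposal is correct and follows essentially the same route as the paper. It uses a simultaneous structural induction over all cuts and timestamps, proves measurability of $x\mapsto\mu_x(E_x)$ in the probabilistic case (the paper imports this as Kallenberg's Lemma~3.2, while you derive it via the $\pi$--$\lambda$ theorem), and obtains kernel-independence from the a.e.\ uniqueness of the kernel at $L(C')$ together with the bound $\int|\mu_x(E^1_x)-\mu_x(E^2_x)|\,\hat\mu_{L(C')}(dx)\le\mu(E^1\triangle E^2)=0$. Your triangle $\mu^1_x(E^1_x)$, $\mu^1_x(E^2_x)$, $\mu^2_x(E^2_x)$ packages the paper's two steps slightly more cleanly, because it changes all kernels at once rather than separating the kernel at $L(C')$ from the others.
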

\begin{proof}
    We prove both statements simultaneously by induction on the structure of the
formula.

    \textbf{Base case.}
For atomic formulas $\phi = a$, measurability is immediate, and the evaluation
does not depend on any disintegration kernel.

    \textbf{Induction step (Boolean and temporal operators).}
Assume the induction hypotheses hold for all strict subformulas.

Negation and conjunction correspond to complementation and intersection of
measurable sets, respectively, and therefore preserve measurability. The a.e.-equivalence is preserved under these operations.

The operator $\LTLnext \phi$ defines a measurable set since it shifts the
timestamp and relies on the induction hypothesis.

The formula $\phi_1 \LTLuntil \phi_2$ defines a countable Boolean combination of
measurable sets and is therefore measurable. The a.e.-equivalence is
again preserved.

\textbf{Induction step (Comparison operator).}
For formulas of the form $\psi \le r$, measurability follows from the induction
hypothesis applied to $\psi$. Moreover, if two interpretations of $\psi$ are
$\mu$-a.e. equivalent, then the induced satisfaction sets are also
$\mu$-a.e. equivalent.

    \textbf{Induction step (Quantitative operators).}
By the induction hypothesis, the operators $r \cdot \psi$ and
$\psi_1 + \psi_2$ define linear combinations of measurable functions and are
therefore measurable. The a.e.-equivalence is preserved.

\textbf{Induction step (Probabilistic operator).}
    It remains to consider $\psi = \PP(\phi \mid [A], B)$.
    Let $C'$ be the updated cut defined in~\eqref{eq_s'}.
    
    By the definition, $\llbracket T, C,n,\psi \rrbracket_\mu$ depends only on the projection $T_{L(C')}$. Hence, it is enough to prove the measurability and $\hat{\mu}_{L(C')}$-a.s. equivalence of the function $g: L(C') \to \RR$ defined by
    $$g(T') = \mu_{T'}(t\in U(C') \mid (T',t), C',n \models_\mu \phi ).$$
    
    The characteristic function $\mathbf{1}_{\phi}: L(C')\times U(C') \to \RR$ of the set $\{T\in (2^{AP})^\omega \mid T,C',n \models_\mu \phi \}$ is measurable by the induction hypothesis.   
 
    By the ~\cite[Lemma~3.2]{book}, the function $\mathbf{1}_\phi \cdot \mu_{T'}: L(C')\to \mathcal{M}(U(C'))$  is a kernel. For $T'\in L(C')$,
    $$g(T') = \int_{U(C')} \mathbf{1}_\phi(T',t) \mu_{T'}(dt) = \mathbf{1}_{\phi}\cdot \mu_{T'}(U(C')).$$

    Thus, $g$ is measurable by the definition of the kernel. 
    
    We now show $\hat{\mu}_{L(C')}$-a.e. equivalence in two steps.

First, if a different disintegration kernel with respect to $L(C')$ is chosen,
then by Theorem~\ref{disintegration} the kernels agree
$\hat{\mu}_{L(C')}$-a.e., and so do the resulting functions $g$.

    Next, consider a different choice of all disintegration kernels except for the one with respect to $L(C')$. Denote the characteristic function for this choice as $\mathbf{1}_\phi'$ and the resulting function as $g'$. By the induction hypothesis $\mathbf{1}_\phi'$ is $\mu$-a.e. equivalent to $\mathbf{1}_\phi$. Denote the set where they differ as $\Delta$. Then
    $$0 = \mu(\Delta) = \int_{L(C')} \int_{U(C')} \mathbf{1}_\Delta(T',t) \mu_{T'}(dt)\hat{\mu}_{L(C')}(dT') = $$
    $$ = \int_{L(C')} \int_{U(C')} |\mathbf{1}_\phi(T',t)- \mathbf{1}_\phi'(T',t)| \mu_{T'}(dt)\hat{\mu}_{L(C')}(dT') \ge $$
    $$ \ge \int_{L(C')} | \int_{U(C')} \mathbf{1}_\phi(T',t) \mu_{T'}(dt)- \int_{U(C')} \mathbf{1}_\phi'(T',t) \mu_{T'}(dt)|\hat{\mu}_{L(C')}(dT') =$$
    $$= \int_{L(C')} |g(T') - g'(T')| \hat{\mu}_{L(C')}(dT')$$
    The second equality follows from the definition of the disintegration kernel and the fact that all $\mu_{T'}$ are given by the same kernel. The inequality is due to the reverse triangle inequality.
    
    Integral of nonnegative function $|g'(T')-g(T')|$ with respect to $\hat{\mu}_{L(C')}$ is zero, hence the function is zero $\hat{\mu}_{L(C 
    ')}$-a.e.
    
\end{proof}

\begin{theorem}
    The DTL model-checking problem is undecidable.
\end{theorem}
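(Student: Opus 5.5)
We reduce from the emptiness problem for probabilistic Büchi automata (PBA) with threshold semantics. Paz/Baier--Bertrand--Größer show that deciding, for a PBA $\mathcal{B}$ and threshold (e.g.\ $\tfrac12$, or even the qualitative "positive probability" semantics), whether some infinite word $w$ has $\PP_{\mathcal{B}}(\text{accept } w) > \tfrac12$ is undecidable. Finite-word probabilistic automata emptiness with strict threshold would also suffice, and we choose whichever encodes most cleanly. The idea is exactly the blind-scheduler cascade product introduced earlier. Let $M_1$ be a Markov chain that generates every infinite word over the input alphabet $\Sigma$ with positive weight on every cylinder. For instance, take $S_1=\Sigma$, precisely labeled by fresh propositions $AP_1$ encoding letters, with uniform transitions. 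Let $M_2$ be the MDP whose actions are $2^{AP_1}$ (i.e.\ letters) and whose transitions are those of $\mathcal{B}$, labelling accepting states with a proposition $f$. Then $M=M_1\otimes M_2$ is a Markov chain, and conditioning on the full $AP_1$-trace should recover the PBA run distribution on that word.

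The key step is to prove that, for $\hat\mu_{L(C')}$-a.e.\ word $w$ (with $C'(a)=\omega$ for $a\in AP_1$ and $C'(a)=0$ otherwise), the disintegration measure $\mu_w$ coincides with the run measure of $\mathcal{B}$ on $w$. We use \Cref{constructive_disintegration}. By the cascade definition, $\mu(w_n\times B)$ factors as $\mu_1(\mathsf{Cyl}(w[n]))\cdot \PP_{\mathcal{B}}^{w}(B)$ for $B$ depending on the first $n$ steps, because $M_2$'s evolution depends only on the letters. The ratio therefore stabilises to $\PP^w_{\mathcal{B}}(B)$ on cylinders, and uniqueness of measure extension handles general $B$ (cf.\ the kernel uniqueness in \Cref{disintegration}). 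Consequently the formula
\[
\PP\bigl(\PP(\LTLglobally\LTLeventually f \mid [\emptyset], AP_1) > \tfrac12\bigr) > 0
\]
(with $>$ obtained from $\le$ and negation) holds in $M$ iff the set of words accepted with probability $>\tfrac12$ has positive $\mu_1$-measure.

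The main obstacle is the gap between "some word" (PBA emptiness) and "a positive-measure set of words" (what DTL sees up to a.e.). For a single word of measure zero the semantics is blind, which is why an almost-everywhere-robust undecidable problem is needed. My first attempt is to use the finite-word strict-threshold emptiness problem, where the issue disappears. Encode an end marker $\$$ after which $M_1$ keeps emitting $\$$ forever, and let $M_1$ put positive probability on each finite word followed by $\$^\omega$. These traces form a countable set, each of positive measure, so "exists a finite word accepted with probability $>\tfrac12$" is equivalent to "positive measure of such traces". Conditioning on the whole $AP_1$ trace is then the finite-prefix case of \Cref{constructive_prefix} on a positive-probability event, so disintegration is ordinary conditioning. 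The PA acceptance after reading $\$$ is encoded by moving $M_2$ into an absorbing $f$-state or sink, which lets us use $\LTLeventually f$. It remains to check that the semantics are kernel-independent here (\Cref{thm:semantic_correctness}) and that undecidability of PA emptiness (Paz) transfers. I expect the detailed verification of the factorisation of $\mu(w_n\times B)$ to be routine, and the a.e. issue to be the real point that this atomic encoding resolves.
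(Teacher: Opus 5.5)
Your final argument, the end-marker encoding, is correct. It reduces from the same problem as the paper (strict-cutpoint emptiness of finite-word PFA), but through a different mechanism.

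The paper lets letters be generated forever and uses only history conditioning, $\PP(\LTLeventually(\PP(a_f\mid[AP])>0.5))>0$. At each time $n$ the condition $[AP]$ is a positive-probability event covered by \Cref{constructive_prefix}, and the $\LTLeventually$ between the two probability operators ranges over word lengths. You instead condition totally on the infinite $AP_1$-trace and use $\$^\omega$ to turn each finite word into an atom of the $AP_1$-marginal, where the kernel is forced to be elementary conditioning. The search over lengths then moves inside the inner probability, into $\LTLeventually f$. The PBA detour is unnecessary, and you are right that it would only capture positive-measure nonemptiness.

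A few small repairs:
\begin{itemize}
\item \Cref{constructive_prefix} is stated for finite-valued cuts, so it does not literally apply. What you need is $\mu(\{x\}\times B)=\hat\mu_{L(C')}(\{x\})\,\mu_x(B)$, which is immediate from $\mu=\hat\mu_{L(C')}\otimes\nu$ for every kernel at an atom $x$ of positive mass. This also makes the kernel-independence check trivial.
\item Traces without $\$$ give $\mu_x(\LTLeventually f)=0$ for a.e.\ $x$, so only the atoms matter in the converse direction.
\item In the cascade product, $M_2$ reads the label of the successor $M_1$-state, so the first letter is never read. This shift is harmless because all words occur.
\end{itemize}

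The paper's route is more economical and shows undecidability using history conditioning $[A]$ alone. Your route yields hard instances of exactly the soft-quantified automaton shape $\PP(\PP(\phi_{acc}\mid AP_{aut})\star\epsilon)>0$ of \Cref{dist_and_prob_aut}, with $\epsilon=\tfrac12$. They live on a cascade product, hence are cascade-compatible, and are precisely labeled if $M_2$ is labeled injectively. So your construction shows directly that the decidability asserted there for $\epsilon\in\{0,1\}$ does not extend to other thresholds. It also avoids the index bookkeeping of the prefix version: there, $[AP]$ evaluated at time $n$ fixes only positions $0,\dots,n-1$, so the chain must be arranged so that the internal state at time $n$ is determined by earlier letters.
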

%We prove the claim by a reduction from the \emph{probabilistic automata emptiness} problem in \Cref{appendix:undecidability}.
\begin{proof}
We reduce from \emph{strict cutpoint emptiness} for probabilistic finite automata (PFA),
which is undecidable (already for cutpoint $1/2$)~\cite{RABIN1963230}.

\textbf{PFA emptiness.}
An instance consists of a finite set of stochastic matrices
$G \subseteq \RR^{d\times d}$, an initial distribution $\pi\in\RR^{d}$, and an
accepting vector $f\in\{0,1\}^{d}$. The question is whether there exist
$m\in\NN$ and $M_0,\dots,M_{m-1}\in G$ such that
\[
\pi^\top M_0 M_1 \cdots M_{m-1} f \;>\; \tfrac12 .
\]

\textbf{Encoding letters.}
Fix a finite set of propositions $AP$ and a surjection $\eta:2^{AP}\twoheadrightarrow G$.
%Without a loss of generality assume that $G$ is enumerated by $2^{AP}$ for a finite set $AP$.
Let $a_f\notin AP$ be a fresh proposition and set $AP_f := AP\cup\{a_f\}$.

\textbf{Markov chain.}
We define a Markov chain
$M=(S,AP_f,l,P,i_M)$ with state space $S:=2^{AP}\times\{1,\dots,d\}$.
The initial distribution is
$i_M(x,j)=\pi_j/|2^{AP}|$ for all $x\in 2^{AP}$ and $j\in\{1,\dots,d\}$.
The labeling is
$$
l(x,j)= \begin{cases}
    x, \text{ if }f_j = 1,\\
    x\cup \{ a_f\}, \text{ otherwise}.
\end{cases}
$$

Transitions are given by, for all $x,y\in 2^{AP}$ and $j,k\in\{1,\dots,d\}$,
\[
P\bigl((x,j),(y,k)\bigr)\ =\ \frac{1}{|2^{AP}|}\cdot \eta(y)(j,k).
\]
Since each $\eta(y)$ is stochastic, $P$ is stochastic as well.

\textbf{DTL formula.}
%Let $\mathsf{Obs}$ be the projection to the $AP$-trace (i.e.\ it forgets $a_f$).
Consider the DTL formula
$$
\phi \ :=\ \PP (\ \LTLeventually (\PP( \,a_f \mid [AP])\ >\ 0.5 )) )\ >\ 0.
$$

\textbf{Correctness.}
Every finite $AP$-prefix (equivalently, every $t\in(2^{AP})^m$) has positive probability
in $M$ by construction. Conditioning on the observation $\mathsf{Obs}=t$ fixes the
letter sequence $t(0),\dots,t(m-1)$ and thus the distribution on the internal index
$j\in\{1,\dots,d\}$ evolves as $\pi^\top \eta(t(0))\cdots \eta(t(m-1))$.
Hence the conditional probability of seeing $a_f$ at the corresponding step equals
$$\pi^\top \eta(t(0))\cdots \eta(t(m-1)) f .$$
Therefore, $M\models\phi$ holds iff there exists a finite word of matrices from $G$
whose acceptance probability exceeds $1/2$, i.e.\ iff the PFA instance is nonempty
above cutpoint $1/2$. Undecidability follows.
\end{proof}

\section{Expressing properties in DTL}

\subsection{Security properties and independence}
We show how DTL can express well-known security properties formulated in terms of independence and conditional independence between different components of an execution trace. Typically, such properties require that conditioning on different high-level behaviors does not change the induced distribution over low-level behaviors.

These properties fall within the linear fragment of DTL, introduced in \Cref{linear}, for which we provide a polynomial-time model-checking algorithm.

\subsubsection{Probabilistic Non-interference}
To define probabilistic non-interference, originally formulated by 
Gray~\cite{130769, 63848}, consider a probabilistic system that communicates with 
agents through high- and low-level channels. Informally, the property requires that 
the low-level trace does not give any information on the distribution of the 
high-level trace, or equivalently, there is no probabilistic information leakage from 
high-level data to low-level data. As defined in~\cite{130769}, probabilistic 
non-interference can be phrased as a conditional-independence requirement. For every 
$n \in \mathbb{N}$, for any high-level input trace $H$ of length $n$, any low-level 
input and output traces $L$ and $O$ of length $n$, and any low-level output event 
$o$ at step $n{+}1$, whenever the joint event $(H, L, O)$ has positive probability, 
the probability of observing $o$ at step $n{+}1$ conditioned on $(H, L, O)$ equals 
the probability conditioned on $(L, O)$.

We demonstrate how to express this property in DTL. Let $M$ be a Markov chain that 
models our probabilistic system, and let the set of atomic propositions be 
$AP = \{h, l, o\}$, where $h$ and $l$ represent the high- and low-level inputs and 
$o$ the low-level output.

We say that $M$ satisfies probabilistic non-interference if
$$ M \models \PP\!\Bigl(
  \LTLglobally \bigl(
    \PP(\LTLnext o \mid [h,l,o])
    =
    \PP(\LTLnext o \mid [l,o])
  \bigr)
\Bigr) = 1.$$

The informal quantification ``for all input traces of length $n$'' is captured by 
the outermost quantification with probability $1$ and the globally operator 
$\LTLglobally$. The innermost equality between probabilistic operators requires that 
$\LTLnext o$ is independent from the history of $h$, conditioning on the history 
of $l$ and $o$.

\begin{remark}
In~\cite{130769}, the system may have an arbitrary number of channels and variables. 
For readability, we use one high-level input, one low-level input, and one output 
proposition. The formulation extends directly to multiple variables by replacing every 
single proposition with a tuple of propositions and updating the conditioning sets 
accordingly.
\end{remark}

\textbf{Relation to other version of probabilistic non-interference.} \label{nonintereference}

In the literature~\cite{abraham2018hyperpctltemporallogicprobabilistic, ClarksonS10, 
1212701}, another version of probabilistic non-interference is used. It requires that 
the distribution of the low-observable trace is the same for every pair of low-level 
equivalent initial states. This version is a weakening of the original probabilistic 
non-interference property: whereas the original requires independence from the 
high-level history at every step of the execution, this version only requires independence from the high-level information of the initial state, leaving open the possibility that high-level information leaks through 
the transition dynamics. For finite state systems, model checking of this weaker 
version can be reduced to a finite number of labeled Markov chain equivalence checks (or probabilistic bisimulation conditions).

In systems with ongoing interaction, the difference is significant. A system may conceal the high-level initial state and thus satisfy the weaker property, while still allowing future low-level observations to depend on the evolving high-level state. In such a system, information about high-level behavior may be leaked dynamically during execution even though the distribution of complete low-level traces is independent of the initial secret. The conditional-independence formulation rules out precisely this form of information flow. In that sense, the original notion of probabilistic non-interference is similar to the possibilistic trace-based version~\cite{6234468, 4556678, clarkson2014temporallogicshyperproperties}. 

\subsubsection{Perfect Indistinguishability}

Perfect indistinguishability~\cite{6769090, Katz2020IntroductionTM,4556678}, also known as perfect secrecy, is commonly
formulated as an indistinguishability requirement for ciphertext distributions,
which in the perfect case reduces to (a form of) independence.

A probabilistic encryption system consists of two components: a key-generation
procedure and an encryption procedure that, given a message and a key, produces
a ciphertext (which can later be decrypted using the key). For simplicity, we
model messages as finite traces over $2^{\{m\}}$ and ciphertexts as finite traces
over $2^{\{c\}}$. We assume that $MD$ receives a message trace and produces a
ciphertext trace, with key generation happening internally.

$MD$ satisfies perfect indistinguishability if for any two messages
$T_1,T_2$ and any ciphertext trace $D$, the probability of outputting $D$ on
input $T_1$ equals the probability of outputting $D$ on input $T_2$.

As above, we model universal quantification over messages by composing $MD$ with
a Markov chain $M_{\mathit{msg}}$ that generates message letters uniformly.
Let
$M := M_{\mathit{msg}} \otimes MD.$

To express perfect indistinguishability in DTL, we need to require that the
message and ciphertext projections are independent. We can do it as follows:
$$M \models \PP\Bigl( \LTLglobally \bigl( \PP( \LTLnext ( \PP( c \mid [m] ) = \PP(c) )  \mid [c] )=1 \bigr) \Bigr) =1 \wedge$$
$$\wedge \PP\Bigl( \LTLglobally \bigl( \PP( \LTLnext ( \PP( m \mid [c] ) = \PP(m) )  \mid [m] )=1 \bigr) \Bigr) =1.$$

 Let us demonstrate that the conjunction above is equivalent to the independence of the projections of an
execution trace onto $(2^{\{m\}})^\omega$ and $(2^{\{c\}})^\omega$.
Let $\mu$ be the probability measure induced by $M$.
Assume the projections are not independent. Then by Lemma 4.6 in~\cite{book} there exist $n,k \in \NN$ and
finite traces $T_m \in (2^{\{m\}})^n$ and $T_c \in (2^{\{c\}})^k$ such that $$\mu((T_m (2^{\{ m\} })^\omega) \times (2^{\{ c\} })^\omega) \cdot \mu((2^{\{ m\} })^\omega \times (T_c(2^{\{ c\} })^\omega)) \neq $$ $$ \neq \mu( (T_m (2^{\{ m\} })^\omega) \times (T_c (2^{\{ c\} })^\omega) ).$$
 Without loss of generality, assume $n\le k$, and assume that for the shorter prefix of $T_c$, it does not hold; otherwise, we can shorten it. In that case, there is at least one $T_{m}' \in (2^{ \{ m\} })^k$, such that
 $T_m' [n] = T_m$, and 
 $$\mu((2^{\{ m\} })^\omega \times (T_c(2^{\{ c\} })^\omega) \mid (2^{\{ m\} })^\omega \times (T_c[k-1](2^{\{ c\} })^\omega)) \neq $$
 $$\neq \mu((2^{\{ m\} })^\omega \times (T_c(2^{\{ c\} })^\omega) \mid ((T_m' (2^{\{ m\} })^\omega) \times (T_c[k-1](2^{\{ c\} })^\omega)),$$
 which fails the first conjunct. If $n\ge k$, the second conjunct fails.

 \subsection{Stochastic environment}\label{environment}

Many stochastic systems consist of several interacting components, one of which models an external stochastic environment. Temporal specifications concern infinite executions, and each complete environment execution (i.e., an infinite trace) naturally induces a conditional probability distribution over the executions of the remaining system. DTL captures these conditional distributions through conditioning on infinite traces, allowing one to express not only the overall probability of satisfying a temporal specification, but also how this probability varies across different environment executions.

This setting naturally occurs in a variety of applications.

\textbf{Communication networks.} A communication network interacting with a stochastic environment modeling the quality of the communication channel and the resulting burst errors \cite{6768434,6769369,articlenetworksurvey}. The Gilbert–Elliott model is widely used to capture burst errors by representing the channel quality as a two-state Markov chain.

\textbf{Hardware systems.} A hardware component interacting with an external stochastic environment modeling radiation-induced transient faults (soft errors), which occur probabilistically \cite{4492840,articlesofterrors}.

\textbf{Embedded systems.} The physical environment of an embedded system can be modeled by a finite-state Markov chain, as in Markov jump linear systems, where the Markov chain represents environmental changes \cite{articlemjls}.

Let the environment be modeled by a Markov chain $M_{env}$ over the set of atomic propositions $AP_{env}$. Let the system be modeled by an MDP $MD$ whose actions are from $2^{AP_{env}}$. Their cascade product $M:= M_{env}\otimes MD$ models the joint behavior of the environment and the system. Let $\mu$ denote the probability measure on infinite traces induced by $M$.   

Let the temporal formula $\phi_{bad}$ specify an undesirable system behavior. DTL can express properties such as 
\textit{With probability at least $0.1$, the environment produces an execution such that, conditioned on this execution, the probability that the system exhibits the undesirable behavior is greater than $0.4$.} This is expressed by the formula
$$M\models \PP\Bigl( \PP\bigl( \phi_{bad} \mid AP_{env}\bigr) > 0.4 \Bigr) > 0.1.$$

In other words, although the system may perform reliably under most environment executions, there exists a substantial subset of environment executions under which its reliability is significantly lower. Such concentration of failures cannot be detected by considering only the overall probability of undesirable behavior. For example, the mean (overall) probability of undesirable behavior over all environment executions may be as low as 0.05, even though there exists a non-negligible subset of environment executions for which the conditional probability of failure exceeds 0.4. Consequently, two systems with the same overall probability of failure may exhibit fundamentally different behavior: one may exhibit a relatively uniform probability of failure across all environment executions, whereas the other may exhibit a high probability of failure concentrated on a significant subset of environment executions. Classical probabilistic temporal logics reason only about the overall probability of satisfying a temporal specification and therefore cannot distinguish between these situations. In contrast, by conditioning on complete environment executions (i.e., infinite traces), DTL enables reasoning not only about the overall probability of satisfying a temporal specification, but also about how this probability depends on environment execution, revealing whether failures are uniformly distributed or concentrated on particular classes of environment executions. In particular, replacing the threshold $0.4$ by $0$ or $1$ yields a property in the qualitative fragment of DTL, for which we present a model-checking algorithm in \Cref{qualitative_section}.

\subsection{Distribution transformers and probabilistic automata}\label{dist_and_prob_aut}

We briefly discuss two verification frameworks for probabilistic systems whose
specifications can be naturally expressed in DTL. The common theme is that,
rather than quantifying existentially over schedulers, one can
reason about a \emph{distribution} over schedulers and ask for the probability
mass of schedulers satisfying a given specification. This modification changes
the original problem and yields a "more probabilistic" counterpart.
For safety and reachability objectives, the two formulations coincide.
For more general $\omega$-regular objectives, however, they may differ, 
and the new one may
even become decidable in cases where the classical existential formulation is
undecidable. Properties that compare probabilities only with 0 and 1 belong to the qualitative fragment of DTL, which we introduce in \Cref{qualitative_section} and show to be decidable.

\subsubsection{MDP as distribution transformers}

Fix an MDP $MD$ and a blind scheduler, i.e.\ an infinite sequence of actions.
Under such a scheduler, the execution of $MD$ induces a sequence of
distributions over states. Properties that refer to this sequence are often
called \emph{distributional properties}. This view received significant attention over the last decade \cite{5600385, 10.1007/978-3-031-37709-9_5, Akshay2018DistributionbasedOF}, with applications in a number of areas, e.g., multi-agent systems \cite{akshay2024certifiedpolicyverificationsynthesis}.

Distributional properties require some linear (specified by linear inequalities) labeling, i.e., functions $\mathcal{P}(S) \to 2^\Sigma$, for the set of states of the MDP $S$ and the set of labels $\Sigma$. \emph{Distributional Model Checking Problem} asks, given an MDP $MD$, a regular set of blind schedulers $\mathcal{A}$, a labeling function $\lambda$, and a B\"uchi automaton $\mathcal{B}$ over $2^\Sigma$, whether there is a scheduler in $\mathcal{A}$, such that it produces the sequence of labels under $\lambda$ which is in $\mathcal{L}(\mathcal{B})$.

Distributional properties can be modeled with DTL, replacing the existential quantification "there is a scheduler" by the probabilistic requirement "there is a set of schedulers of positive measure".

Formally, instead of a regular set of schedulers $\mathcal{A}$, assume a probability distribution over blind schedulers given by a
Markov chain $M_{\mathit{sched}}$, and consider the cascade product
$M := M_{\mathit{sched}} \otimes MD$.
Let $AP_{\mathit{sched}}$ denote the atomic propositions encoding scheduler
choices in $M_{\mathit{sched}}$.
Assume that $M$ is precisely labeled.
For a state $s$ of $MD$, let $\phi_s$ be a (state) formula describing that the $MD$-coordinate of $M$ on this state is $s$.
Then the conditional probability of being in state $s$ given the scheduler
trace is expressible as $\PP(\phi_s \mid [AP_{\mathit{sched}}])$.
Since the logic admits linear combinations of such probabilities, we can encode
any linear labeling $\lambda$, and we can express the LTL property corresponding
to $\mathcal{L}(\mathcal{B})$ (when $\mathcal{L}(\mathcal{B})$ is LTL-expressible).
Let $\phi$ denote the resulting formula.

We can then require that the set of schedulers satisfying $\phi$ has positive
measure. We define the property
$$
M \models \PP( \phi) > 0\bigr.
$$
If $\mathcal{L}(\mathcal{B})$ is a reachability property, this essentially coincides with
pure existential quantification. In general, however, it is stronger, since it
requires a set of schedulers of positive probability mass rather than a single
scheduler.

\subsubsection{Probabilistic Automata over Infinite Words}

Probabilistic automata over infinite words~\cite{pinproceedings, Chatterjee2011DecidablePF} can be viewed as
an MDP in the sense of our modeling framework, together with a B\"uchi, Streett, or Rabin acceptance condition over $2^{AP}$.  A central question studied
in this setting is whether there exists a blind scheduler such that the
probability of producing an infinite trace satisfying a given acceptance
condition meets a prescribed linear
requirement. Typically, this is phrased by comparing the acceptance probability
against thresholds such as $1$, $0$, or an arbitrary $\epsilon \in [0,1]$. It is
known that certain variants, in particular, even those involving comparison against
$0$ or $1$, lead to undecidability~\cite{pinproceedings}.

We can model this setting by assuming a Markov chain $M_{\mathit{aut}}$ with a set of atomic propositions $AP_\mathit{aut}$ such that the action set of $MD$ is $2^{AP_\mathit{aut}}$. Let
$M := M_{\mathit{aut}} \otimes MD$ be the corresponding cascade product. Any acceptance condition over $2^{AP}$ can be modeled by a DTL formula $\phi_{\mathit{acc}}$ without occurrences of probabilistic operators.

Then, for any comparison operator $\star \in \{\le,\ge,=\}$ and any
$\epsilon \in [0,1]$, we can express that the set of schedulers satisfying the comparison has positive probability mass as
$$M\models \PP\Bigl( \PP\bigl( \phi_{acc} \mid AP_{aut}\bigr) \mathrel{\star} \epsilon \Bigr) > 0.$$
Note that this property differs from the original one, since it requires the
existence of a \emph{set} of schedulers (of positive measure), rather than a
single scheduler. Moreover, as we demonstrate in the next section, this property is decidable for $\epsilon \in \{ 0,1\}$, unlike the original one.

\subsection{Related logics} 
The most widely used probabilistic temporal logic, PCTL$^*$~\cite{DBLP:books/daglib/0020348},
combines probabilistic and temporal operators in a branching-time setting.

In the case of a precisely labeled system (when the labeling is injective), DTL subsumes PCTL$^*$. Let $J = [j_1,j_2]$ be an interval. In a precisely labeled system, the
PCTL$^*$ probabilistic operator $\PP_{J}(\phi)$ can be expressed in DTL by
$$
j_1 \le \PP(\phi \mid [AP]) \le j_2,
$$
where conditioning on $[AP]$ corresponds to probability conditioning on the trace up to the current timestamp.

An approach orthogonal to ours to incorporating conditional probabilities into temporal logics is studied in~\cite{DBLP:conf/tacas/BaierKKM14, DBLP:conf/tacas/AndresR08}. The authors consider conditional probabilities of the form \(\PP(\psi \mid \phi)\), where both $\psi$ and $\phi$ are $\omega$-regular events, typically relying on the property $\PP(\psi \mid \phi) = \frac{\PP(\psi \wedge \phi)}{\PP(\phi)}$.

As discussed in Section 1.1 of~\cite{DBLP:conf/tacas/BaierKKM14, DBLP:conf/tacas/AndresR08}, this formalism is also used to express certain security properties involving independence. However, it can only express independence between specific $\omega$-regular outcomes (such as $\LTLglobally l$ and $\LTLfinally h$), and does not allow expressing independence from an arbitrary history, as required for probabilistic non-interference and perfect indistinguishability.

The PHL logic for probabilistic hyperproperties combines probabilistic operators
with explicit quantification over schedulers~\cite{dimitrova2020probabilistichyperpropertiesmarkovdecision}.
To mitigate undecidability, the corresponding model-checking procedure treats
families of schedulers jointly.

Another logic for probabilistic hyperproperties, HyperPCTL~\cite{abraham2018hyperpctltemporallogicprobabilistic}, combines probabilistic reasoning with quantification over states. This allows, for example, expressing a restricted variant of probabilistic non-interference, stating that the distribution of observable outputs is independent of the initial secret state. The difference from the version of probabilistic non-interference considered in our paper is discussed in~\Cref{nonintereference}.

The extension of HyperPCTL called HyperPCTL$^*$~\cite{wang2020statisticalmodelcheckinghyperproperties} allows nesting of probabilistic measurements over paths of the system (sequences of system states) and measurements over tuples of paths. The key difference is that DTL conditions on observable traces (or projections of traces onto selected atomic propositions), whereas HyperPCTL$^*$ reasons about concrete execution paths. Conditioning on traces naturally captures partial observations of an execution and therefore enables the specification of information-flow properties.
For example, the version of probabilistic non-interference considered in~\cite{wang2020statisticalmodelcheckinghyperproperties} coincides with that of HyperPCTL~\cite{abraham2018hyperpctltemporallogicprobabilistic} and requires only independence from the initial secret state.

Logics for possibilistic hyperproperties such as
HyperLTL and HyperCTL$^*$~\cite{clarkson2014temporallogicshyperproperties}
allow nesting and alternation of universal and existential path quantifiers, with temporal operators. Our qualitative fragment can be viewed as a probabilistic counterpart of HyperCTL$^*$, where strict quantifiers are replaced by their "softer" probabilistic counterparts.

There are several program logics for probabilistic conditioning and independence, such as Lilac~\cite{10.1145/3591226}, Bluebell~\cite{Bao_2025}, and Probabilistic Separation Logic~\cite{10.1145/3371123}, as well as probabilistic relational program logics for security properties~\cite{10.1145/1594834.1480894,avanzini2025quantitativeprobabilisticrelationalhoare}. The proof system presented in~\cite{10.1145/3704889} enables reasoning about possibilistic hyperproperties expressed in HyperLTL.

\section{Linear Fragment}\label{linear}
The \emph{linear fragment} of $DTL$ is given by the following grammar:
\begin{align*}
    \kappa &\Coloneqq a \mid \kappa \wedge \kappa \mid \neg \kappa \mid \kappa \LTLuntil \kappa \mid \LTLnext \kappa, \\ 
    \theta &\Coloneqq \PP(\kappa \mid [A_1]) = \PP(\kappa \mid [A_2])
              \mid \LTLglobally\,\theta \mid \LTLnext \theta \mid \PP(\theta \mid [B])=1,\\
    \chi   &\Coloneqq \PP(\theta) = 1 \mid \chi\wedge\chi \mid \neg\chi. 
\end{align*}
Here $A_1,A_2,B\subseteq AP$.

The linear fragment of DTL can express independence of the suffix from some part of the prefix, including properties like probabilistic non-interference and perfect indistinguishability.

In this chapter, we show that model-checking for the linear fragment is in $\mathrm{PTIME}$, using an algorithm inspired by~\cite{doi:10.1137/0221017}. But first, we define an auxiliary construction.

\subsection{Multidynamic Automata}

We define a new type of automata that we use in this section. Intuitively, we use it to prove the generalized notion of equivalence of labeled Markov chains and prove the equivalence of two different dynamics on a graph with respect to some quadratic form.

\begin{definition}
A \emph{multidynamic automaton} is a tuple $D=(n,Q,Q_{\mathrm{in}},Q_{\mathrm{fin}},\delta,z)$,
where $n\in\NN$, $Q$ is a finite set of states, $Q_{\mathrm{in}},Q_{\mathrm{fin}}\subseteq Q$,
$\delta:Q\times Q \rightharpoonup (\RR^{n\times n}\times \RR^{n\times n})$ is a partial function,
and $z:Q_{in}\to\RR^n \times \RR^n$ is a set of pairs of initial vectors.

For $k\ge 1$, a word $p\in Q^k$ is a \emph{path} of $D$ if $p(0)\in Q_{\mathrm{in}}$ and
$\delta(p(i),p(i+1))$ is defined for every $i<k-1$.

For every path $p\in Q^k$, define $val(p)\in \RR^n\times \RR^n$ recursively:
if $k=1$ then $val(p)=z(p(0))$, and if $p'=p\cdot q$ is obtained by appending $q\in Q$ to $p$, then
$$
val(p')=\bigl(\delta(p(k-1),q)_1\, val(p)_1,\ \delta(p(k-1),q)_2\, val(p)_2\bigr).
$$
\end{definition}

Given a matrix $W\in \RR^{n\times n}$, and a multidynamic automaton $D = (n,Q,Q_{in}, Q_{fin},\delta,z)$, we say that $D$ globally satisfies the quadratic form $W$, if for any path $p\in Q^k$ with $p(k-1)\in Q_{fin}$, the equality $val(p)_1^T W val(p)_2 = 0$ holds.

\begin{lemma}\label{multidynamic_lemma}
Given $W\in\RR^{n\times n}$ and a multidynamic automaton
$D=(n,Q,Q_{\mathrm{in}},Q_{\mathrm{fin}},\delta,z)$, deciding whether $D$ globally satisfies $W$
can be done in $\mathrm{PTIME}(|D|\cdot n)$.
\end{lemma}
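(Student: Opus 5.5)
The plan is to linearise the bilinear condition with tensor products and then run a Tzeng-style basis computation (as in~\cite{doi:10.1137/0221017}). The key identity is
$$val(p)_1^T W val(p)_2 = \mathrm{vec}(W)^T \bigl(val(p)_1 \otimes val(p)_2\bigr).$$
For each path $p$, set $v(p) := val(p)_1\otimes val(p)_2 \in \RR^{n^2}$. By the mixed-product property, appending $q$ to a path ending in $q'$ gives
$$v(p\cdot q) = \bigl(\delta(q',q)_1\otimes\delta(q',q)_2\bigr)\, v(p).$$
So $v$ evolves linearly along the automaton's edges, with matrices $N_{q',q}:=\delta(q',q)_1\otimes\delta(q',q)_2$ of size $n^2\times n^2$. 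The initial vectors are $v(q)=z(q)_1\otimes z(q)_2$ for $q\in Q_{\mathrm{in}}$.

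For each state $q\in Q$, let $V_q := \mathrm{span}\{v(p) \mid p \text{ a path ending in } q\}\subseteq \RR^{n^2}$. The automaton $D$ globally satisfies $W$ if and only if $\mathrm{vec}(W)^T x=0$ for every $x\in V_q$ and every $q\in Q_{\mathrm{fin}}$. By linearity, it suffices to check this on a basis of each $V_q$.

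The family $(V_q)_q$ is the least family of subspaces that contains the initial vectors and is closed under $V_{q'} \mapsto N_{q',q}V_{q'} \subseteq V_q$ along every defined edge. A basis for each $V_q$ is computed by a worklist procedure:
\begin{enumerate}
\item Initialise each $B_q$ with $v(q)$ for $q\in Q_{\mathrm{in}}$, keeping it only if it is nonzero.
\item Repeatedly pop a pair $(q',x)$ with $x\in B_{q'}$. For each $q$ with $\delta(q',q)$ defined, compute $N_{q',q}x$, test by Gaussian elimination whether it lies in $\mathrm{span}\,B_q$, and if it does not, add it to $B_q$ and push $(q,N_{q',q}x)$.
\end{enumerate}

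Soundness follows because every vector added is some $v(p)$. For completeness, induct on path length. Every $v(p)$ with $p$ ending in $q'$ lies in $\mathrm{span}\,B_{q'}$. Each basis vector of $B_{q'}$ was processed along every edge, so by linearity the image $N_{q',q}v(p)$ lies in $\mathrm{span}\,B_q$.

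For termination, each $B_q$ has at most $n^2$ elements, so at most $|Q|n^2$ vectors are ever added. Each pop induces at most $|Q|$ membership tests, each polynomial in $n$. This gives a total bound polynomial in $|D|$ and $n$. Finally, check $\mathrm{vec}(W)^Tx=0$ for all $x\in B_q$ with $q\in Q_{\mathrm{fin}}$.

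The main obstacle is bit complexity. Exact rational arithmetic in Gaussian elimination might let entries blow up when vectors are generated by iterated multiplication. I would handle this as Tzeng does: each stored vector is a product of at most $|Q|n^2$ matrices applied to an initial vector. Its bit size is therefore polynomially bounded, and elimination over the rationals with Bareiss-style fraction-free updates keeps intermediate sizes polynomial. If needed, I would alternatively keep the basis in reduced echelon form, whose entries are ratios of determinants of submatrices of stored vectors and hence have polynomially bounded size.
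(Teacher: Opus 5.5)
Your proof is correct, but it runs the fixpoint in the opposite direction from the paper.

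\textbf{Common starting point.} Both arguments use the same Kronecker linearisation. The paper uses the column-major $\mathrm{vec}$ and writes $val(p)_2^T\otimes val(p)_1^T$. Your identity $\mathrm{vec}(W)^T(val(p)_1\otimes val(p)_2)$ therefore presupposes the row-major convention; otherwise the factors must be swapped.

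\textbf{The paper's route.} The paper then works \emph{backwards}:
\begin{enumerate}
\item It defines the ``safe'' subspace $L\subseteq\RR^{Q\times n^2}$ of vectors whose $Q_{\mathrm{fin}}$-components are orthogonal to $\mathrm{vec}(W)$.
\item It lifts each edge to a linear map $\Delta_{q,q'}$ on $\RR^{Q\times n^2}$.
\item It computes the greatest $\Delta$-invariant subspace of $L$ via the descending chain $L_{i+1}=L_i\cap\bigcap_{(q,q')\in E}\Delta_{q,q'}^{-1}(L_i)$, which is stable after $|Q|n^2$ steps.
\item It checks that the single vector $\vec v^{z}$, which collects all initial values, lies in this subspace.
\end{enumerate}

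\textbf{Your route.} You instead compute \emph{forwards}, in Tzeng's style, the least invariant family containing the initial vectors, namely the per-state spans $V_q\subseteq\RR^{n^2}$.

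\textbf{What each buys.} The backward iteration does not depend on the initial vectors: once $\bigcap_i L_i$ is known, any initial assignment can be tested against it. Your forward version has three advantages:
\begin{enumerate}
\item It only manipulates $|Q|$ subspaces of $\RR^{n^2}$, rather than preimages and intersections in $\RR^{Q\times n^2}$.
\item It yields a concrete violating path whenever some basis vector at a final state fails the test.
\item It comes with an explicit bit-size argument: every stored vector is a product of at most $|Q|n^2$ edge matrices applied to an initial vector. The paper does not address bit complexity at all. Its backward iteration would need the dual bound, on constraints obtained by composing the functionals $\vec v\mapsto \vec v_q\cdot\mathrm{vec}(W)$, for $q\in Q_{\mathrm{fin}}$, with at most $|Q|n^2$ maps $\Delta_{q,q'}$.
\end{enumerate}
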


\begin{proof}
    For $M\in \RR^{n\times n}$, denote as $vec(M) \in \RR^{n^2}$ a vector representation of a matrix $M$. By \cite{IMM2012-03274}, for $vec(AMB) =(B^T \otimes A)vec(M) $, where $\otimes$ denotes the Kronecker product.

    For a path of $D$, $p\in Q^k$, define $lval(p)\in \RR^{n^2}$ recursively. If $k=1,$ define $lval(p) = z(p(0))_2^T\otimes z(p(0))_1^T$. If $p' = pq$ is obtained by appending a state $q\in Q$ to a path $p$, define $lval(p') = (\delta(p(k-1),q)_2^T \otimes \delta(p(k-1),q)_1^T)lval(p)$.
    Note that $lval(p) = val(p)^T_2 \otimes val(p)^T_1$, since $(A\otimes B)(C\otimes D) = AC\otimes BD$. Thus, $val(p)_1^TWval(p)_2 = lval(p)vec(W)$ and it is enough to show that $lval(p) vec(W)=0$ for every path $p$.

    Define $L\subseteq \RR^{Q\times n^2}$, such that $\vec{v}\in L$ if for every $q\in Q_{fin}$, the equality $\vec{v}_q vec(W) =0$ holds.

    Let $E:=\{(q,q')\in Q\times Q \mid \delta(q,q')\ \text{is defined}\}$.
    
    For $(q,q')\in E$, define a linear function $\Delta_{q,q'} : \RR^{Q\times n^2} \to \RR^{Q\times n^2}$, such that $\Delta_{q,q'}(\vec{v})_{q'} = (\delta(q,q')_2^T \otimes \delta(q,q')_1^T)\vec{v}_q$, and for every $q''\neq q',$ $\Delta_{q,q'}(\vec{v})_{q''} = 0^{n^2}$.

    For every $i\in \NN$, define $L_0=L$, and $L_{i+1} = L_i \cap \bigcap_{(q,q')\in E} \Delta_{q,q'}^{-1}(L_i)$. It is a descending chain of linear subspaces of $\RR^{Q\times n^2}$, thus it stabilizes in $|Q|\cdot n^2$ steps, since every decrease implies the decrease of a dimension, thus $\bigcap_{i\in \NN} L_i = L_{|Q|\cdot n^2}$.

    Clearly,  $\bigcap_{i\in \NN} L_i $ is a set of vectors from $L$, for which an application of any finite combination of $\Delta_{q,q'}$ leaves it in $L$. Note that $\Delta_{q,q'}$ models a one-step transition, and $L$ models the set of vector, satisfying $W$, thus $D$ globally satisfies the quadratic form $W$ iff $\vec{v}^z\in  L_{|Q|\cdot n^2}$, where $\vec{v}^z\in \RR^{Q\times n^2}$, with 
    $$\vec{v}^z_q = \begin{cases}
        z(q)_2^T\otimes z(q)_1^T, \ \text{if } q\in Q_{in},\\
        0^{n^2}, \ \text{else}.
    \end{cases}$$

    The last condition can be checked in $\mathrm{PTIME}(|Q|\cdot n)$.
\end{proof}

\subsection{Algorithm}

\begin{theorem} \label{linear_alg}
    The model checking problem for a linear DTL formula $\chi$, and a Markov chain $M$, is decidable in $\mathrm{PTIME}(|M|)$ and in $\mathrm{EXPTIME}(|\chi|)$.
\end{theorem}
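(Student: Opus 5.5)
We reduce every atomic check of the linear fragment to instances of \Cref{multidynamic_lemma}. The top level is easy: $\chi$ is a Boolean combination of formulas $\PP(\theta)=1$, so it suffices to decide each $\PP(\theta)=1$ separately. We then proceed by induction on $\theta$, maintaining the following invariant. For each $\theta$ we build a multidynamic automaton, together with a quadratic form, such that $\PP(\theta)=1$ holds iff the automaton globally satisfies the form.

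\textbf{Product with the path formula.} First we convert every path formula $\kappa$ into a deterministic Rabin automaton $\mathcal{D}_\kappa$ over $2^{AP}$, which has exponential size in $|\kappa|$. This is the source of the $\mathrm{EXPTIME}(|\chi|)$ factor. In the product $M\times\mathcal{D}_\kappa$ we compute, by standard BSCC analysis and solving a linear system in polynomial time, the vector $x\in\RR^{S\times Q_\kappa}$ of probabilities of acceptance from each product state.

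\textbf{Base case as a multidynamic automaton.} Take $\theta=\bigl(\PP(\kappa\mid[A_1])=\PP(\kappa\mid[A_2])\bigr)$, evaluated at the initial time, where the cut has value $0$ everywhere except as inherited. By \Cref{constructive_prefix}, $\PP(\kappa\mid[A_j])$ at position $m$ equals
\[
\frac{\alpha_j^T x}{\alpha_j^T\mathbf 1},
\]
where $\alpha_j$ is the unnormalised forward vector over $S\times Q_\kappa$. This vector is obtained by summing over all prefixes that agree with $T$ on $A_j$ up to $m$. It is propagated by matrices $P_{\sigma|A_j}$ that keep only the transitions whose target label agrees with $\sigma$ on $A_j$. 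Clearing denominators, the equality becomes
\[
(\alpha_1^T x)(\alpha_2^T\mathbf 1)-(\alpha_2^T x)(\alpha_1^T\mathbf 1)=0.
\]
This is a bilinear form $v_1^T W v_2=0$ in the pair $v_1=\alpha_1$, $v_2=\alpha_2$, with $W=x\mathbf 1^T-\mathbf 1x^T$. Equivalently, stacking both vectors into the two components of the automaton, it is the bilinear form required by \Cref{multidynamic_lemma}.

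The states of the multidynamic automaton are the letters $\sigma\in 2^{AP}$, and the transition $\sigma\to\sigma'$ carries the pair $(P_{\sigma'|A_1},P_{\sigma'|A_2})$. Null prefixes make both sides $0$ and so satisfy the form vacuously, which matches the a.e. quantification. Dividing by positive denominators is harmless.

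\textbf{Temporal and nested cases.} The operators $\LTLglobally\theta$ and $\LTLnext\theta$ are handled by choosing $Q_{fin}$: all states for $\LTLglobally$, and only paths of a fixed length, encoded by a counter component, for $\LTLnext$. The nesting $\PP(\theta\mid[B])=1$ requires that, for a.e. $B$-prefix, the conditional measure gives $\theta$ probability $1$. Since each of our conditions is "zero on all positive-probability prefixes of a certain kind," this amounts to weighting by prefixes consistent with $B$. We realise it by replacing the forward dynamics before the current cut with the $B$-restricted dynamics, which amounts to composing the initial vectors of the automaton with the $B$-filtered propagation.

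\textbf{Main obstacle.} The hardest part is making this nesting precise. An outer cut $[B]$ at time $m_0$ followed by an inner $[A_j]$ at a later time $m$ yields mixed cuts, in which propositions in $B\setminus A_j$ are frozen at $m_0$ while the others advance. The forward vectors therefore have to carry a phase component, giving a product of phase, letter, and the states of $M$ and $\mathcal{D}_\kappa$. We also have to check that "almost surely zero" commutes correctly with the conditioning, which we do via the disintegration identity $\mu=\hat\mu\otimes\nu$ from \Cref{disintegration}.

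Once this is set up, \Cref{multidynamic_lemma} runs in time polynomial in $|M|\cdot 2^{O(|\chi|)}$, which gives the stated bounds.
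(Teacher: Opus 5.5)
Your high-level plan is the paper's: filter forward vectors by the conditioned propositions, clear denominators to get the rank-two form $W=x\mathbf 1^T-\mathbf 1x^T$, then apply \Cref{multidynamic_lemma}. The gap is that your automaton checks the wrong set of prefixes. With the letters $\sigma\in 2^{AP}$ as states and all transitions $\sigma\to\sigma'$, every letter sequence is a path. Your claim that null prefixes make both sides $0$ is false: $\alpha_j$ is filtered only on $A_j$, so it vanishes only when the $A_j$-projection of the reference prefix is null, and a null prefix need not have null projections. For example, let $AP=\{a,b,c\}$, let $M$ start in $s_1$ (label $\{a\}$) or $s_2$ (label $\{b\}$) with probability $\tfrac12$ each and then move to absorbing states labelled $\{c\}$ and $\emptyset$ respectively, and let $\theta=\LTLnext\bigl(\PP(c\mid[a])=\PP(c\mid[b])\bigr)$. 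On the two non-null prefixes each one-letter projection already determines the path, so $M\models\PP(\theta)=1$. But for the null first letter $\emptyset$, $\alpha_1$ is supported on the $s_2$-branch and $\alpha_2$ on the $s_1$-branch, and the form equals $-\tfrac14$. So your check rejects a model of the formula. The missing idea is to let $M$ itself generate the reference prefix. The paper's automaton has control states $S\times\{0,\dots,k\}$, initial states $(s,0)$ with $i(s)>0$, and transitions only where $P(s,s')>0$. Its paths are then exactly the positive-probability paths of $M$, their labels are exactly the non-null prefixes, and the vectors stay in $\RR^{S}$. Deciding non-nullness of letter sequences directly would instead require tracking sets of reachable states.

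This choice also removes your ``main obstacle'', and no induction on $\theta$ is needed. Your invariant could not be maintained anyway, since the meaning of an inner $\theta$ depends on the cut and timestamp inherited from outside. All of $\LTLglobally$, $\LTLnext$ and $\PP(\cdot\mid[B])=1$ are universal, so under the outer $\PP(\cdot)=1$ they amount to quantifying over all positive-probability paths and all admissible choices of time steps. The phase belongs in the control state, not in the vectors:
\begin{itemize}
\item level $u$ records which temporal operator $\star_u$ is currently consuming steps, and transitions that stay at level $u$ exist iff $\star_u=\LTLglobally$;
\item the filter applied at level $u$ is $A_y\cup B_u$, where $B_u$ is the union of the $B$'s of universal operators nested below $\star_u$, which produces exactly your mixed cuts;
\item $Q_{fin}=S\times\{k\}$.
\end{itemize}
Finally, your complexity bound does not follow from your construction. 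A deterministic Rabin automaton for an LTL formula is in general doubly exponential in $|\kappa|$, which gives $\mathrm{2EXPTIME}$ in $|\chi|$. The $Q_\kappa$ component of your forward vectors is also either wrong or redundant. If $\mathcal D_\kappa$ is run along the prefix, it evaluates $\kappa$ at time $0$ rather than at the current timestamp. Otherwise it just sits in its initial state at the current timestamp. Only the numbers $x_s=\mu^s(E^0_\kappa)$ enter $W$. The paper computes these with the Courcoubetis--Yannakakis algorithm, in time polynomial in $|M|$ and exponential in $|\kappa|$.
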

\begin{proof}

 By the definition of the semantics, satisfaction of a formula $\kappa$, given by a grammar above, does not depend on a cut or a measure. Let $E^n_\kappa \subseteq (2^{AP})^\omega $ be the set of traces, satisfying $\kappa$ from step $n$ (for any measure and any cut).

By \Cref{constructive_prefix}, for a cut $C:AP\to \NN$, taking only finite values, $n\in \NN$, for $\mu$-a.e. $T$,
 $$\llbracket T,C,n, \PP    (\kappa \mid [A]) \rrbracket_\mu =  \frac{\mu(T_1\in (2^{AP})^\omega \mid T_1\in E^n_\kappa \text{ and } {T_1}_{L(C_A)} = T_{L(C_A)})}{\mu(T_1\in (2^{AP})^\omega \mid {T_1}_{L(C_A)} = T_{L(C_A)})},$$

 where $C_A$ is a cut $C$ updated with $[ A ]$ according to \Cref{eq_s'}.

Let $\mu^s\in \mathcal{P}((2^{AP})^\omega)$ be a measure, induced by a Markov chain $M_s = (S, AP,l, R, i_s)$, with $i_s(s) = 1$.

%Let $E_\kappa\subseteq (2^{AP})^\omega$ the set of traces $T$, such that $T,C_0,0 \models_\mu \kappa$.

As described in \cite{10.1145/210332.210339}, for a given $\kappa$, for every $s$, $\mu^s(E^0_\kappa)$ can be computed in $\mathrm{PTIME}(|M|)$ and in $\mathrm{EXPTIME}(|\kappa|)$.
 
 Denote $M= (S, AP,l, R, i)$. Let $\nu \in \mathcal{P}(S^\omega)$ be a measure induced on the set of paths $S^\omega$ by $M$. $\mu$ is an image of $\nu$ under $l$. Hence,
 $$\mu(T_1\in (2^{AP})^\omega \mid T_1\in E^n_\kappa \text{ and } {T_1}_{L(C_A)} = T_{L(C_A)}) =  $$
 $$ = \nu(\pi \in S^\omega \mid l(\pi)\in E^n_\kappa \text{ and } {l(\pi)}_{L(C_A)} = T_{L(C_A)}) = $$
 $$= \sum_{s\in S} \mu^s(E_\kappa^0) \cdot \nu(\pi\in S^\omega \mid \pi(n) = s \text{ and } {l(\pi)}_{L(C_A)} = T_{L(C_A)}), $$
 %where $\mu^s\in \mathcal{P}((2^{AP})^\omega)$, induced by a Markov chain $M_s = (S, AP,l, R, i_s)$, with $i_s(s) = 1$, and $\nu$ is a measure on $S^\omega$ induced by $M$. Obviously, $\mu$ is an image of $\nu$ under $l$, thus

 Thus,
 $$\mu(T_1\in (2^{AP})^\omega \mid {T_1}_{L(C_A)} = T_{L(C_A)}) = \nu (\pi \in S^\omega \mid  {l(\pi)}_{L(C_A)} = T_{L(C_A)}).$$

 Hence, for $\mu$-a.e. $T$, the equality $T,C,n \models (\PP(\kappa_1 \mid [A_1]) = \PP(\kappa_2 \mid [A_2]))  $ holds iff,
 $$\sum_{s\in S}  \frac{\mu^s(E^0_{\kappa_1}) \cdot \nu(\pi\in S^\omega \mid \pi(n) = s \text{ and } {l(\pi)}_{L(C_{A_1})} =  T_{L(C_{A_1})}))}{\nu (\pi \in S^\omega \mid  {l(\pi)}_{L(C_{A_1})} =  T_{L(C_{A_1})})} = $$
  $$= \sum_{s\in S}  \frac{\mu^s(E^0_{\kappa_2}) \cdot \nu(\pi\in S^\omega \mid \pi(n) = s \text{ and } {l(\pi)}_{L(C_{A_2})} =  T_{L(C_{A_2})}))}{\nu (\pi \in S^\omega \mid  {l(\pi)}_{L(C_{A_2})} =  T_{L(C_{A_2})})},$$
    where $C_{[A_1]}$ and $C_{[A_2]}$ are version of cut $C$ updated according to \Cref{eq_s'} with $[A_1]$ and $[A_2]$, respectively.

    Suppose $\chi = (\PP(\theta) = 1)$ (if $\chi$ is a conjunction of such properties and their negations, each conjunct can be handled separately). By the definition $\theta$ is obtained by repeated application of  $k$ temporal operators $\LTLnext,\LTLglobally$ and some number of probabilistic universal operator $\PP(\dots \mid [B]) = 1$ to $(\PP(\kappa_1 \mid [A_1]) = \PP(\kappa_2 \mid [A_2]))$. Let us enumerate temporal operators from outermost to innermost with $\star_{0}$ being the outermost probabilistic operator, and $\star_{k-1}$ being the innermost one. Moreover, for an operator $\star_i$, denote a subformula to which it is applied as $\theta_i$, hence $\theta$ has a subformula $\star_i\theta_i$. Define $B_i\subseteq AP$ as a set of atomic propositions that appear in universal probabilistic operator inside of $\theta_i$. Formally, for $i<k$ define
    $$B_i = \bigcup (B \mid \theta_i \text{ has a subformula of the form } (\PP(\theta' \mid [B])=1) ).$$
    Additionally define $B_k = \emptyset$.

    Define a multidynamic automaton $D = (|S|, Q, Q_{in},Q_{fin}, \delta, z)$, where $Q=S\times \{ 0,\dots, k\}$, $Q_{fin} = S\times \{k \} $, $Q_{in} = \{(s,0) \mid i(s)>0 \}$.

    For $(s,0)\in Q_{in}$, $y\in \{ 1,2\}$, and $ j<|S|$, define  $$(z(s,0)_y)_j = \begin{cases}
        i(s_j), \text{ if }l(s)_{A_y\cup B_0} = l(s_j)_{A_y\cup B_0},\\
        0, \text{ otherwise.}
    \end{cases}$$
    
    For every $u<k$, and every $s,s'<|S|$, such that $P(s,s') > 0$ define $\delta((s,u),(s',u+1))\in \RR^{|S|\times|S|} \times  \RR^{|S|\times|S|}$ for every $j,j'<|S|$ and $y\in \{ 1,2\}$ as follows 
    $$\delta((s,u),(s',u+1))_y(j,j') = $$ $$ = \begin{cases}
        P(s_j,s_{j'})  \text{ if }l(s_{j'})_{A_y \cup B_u} = l(s')_{A_y \cup B_u}, \\
        0,  \text{ otherwise.}
    \end{cases}$$
    
    Additinally, if $\star_u = \LTLglobally$, define $\delta((s,u),(s',u)) \in \RR^{|S|\times |S|} \times \RR^{|S|\times |S|}$ as follows
    $$\delta((s,u),(s',u))(j,j')  = \delta((s,u),(s',u+1))(j,j').$$

Define, $W'\in \RR^{|S|\times 2}$, $W''\in \RR^{2\times |S|}$, where 
$$W'(j,1) = \mu^{s_j}(E_{\kappa_1}^0), \ \ \ \ W'(j,2) = 1,$$
$$W''(j,1) = -1, \ \ \ \ W''(j,2) = \mu^{s_j}(E_{\kappa_2}^0).$$
Define $W\in \RR^{|S|\times |S|}=W'W''$. Note that for every $v,w\in \RR^{|S|}$,
$$(\sum_{j<|S|} \mu^{s_j}(E_{\kappa_1}^0) v_j)(\sum_{j<| S|}w_j)= (\sum_{j<|S|} \mu^{s_j}(E_{\kappa_2}^0) w_j) (\sum_{j<|S|}v_j)$$
iff $(v)^T W w=0$.

Consider a path $p\in Q^n$ of $D$. It models a path of $M$ (with the first coordinate), with a non-decreasing sequence of numbers (with the second coordinate), which models the temporal operators' choice of the number of step, during the semantical evaluation of $\chi$. Moreover, for $j<|S|$ and $y\in \{1,2\}$, we obtain
$$(val(p)_y)_j = \nu(\pi \in S^\omega \mid \pi(n-1) = s_j \text{ and } \forall i < n. l(\pi(i))_{A_y\cup B_{p(i)_2}} = l(p(i)_1)_{A_y\cup B_{p(i)_2}}  ).$$
Note that for $p\in Q^n$, such that $p(n-1)_2 = k$,
$$(val(p)_y)_j = \nu(\pi\in S^\omega \mid \pi(n-1) = s_j \text{ and } l(\pi)_{L(C_y)} = l(p)_{L(C_y)}),$$
where $C_y$ is a cut that is defined for $a\in AP$ by
$$C_y(a) = \begin{cases}
    n-1, \text{ if } a\in A_y,\\
    max(j<n\mid a\in B_{p(j)_2} \text{ or } j=0), \text{ otherwise.}
\end{cases}$$

Moreover, define a cut $C$, such that $C(a) =  max(j<n\mid a\in B_{p(j)_2})$, hence for $\mu$-a.e. $T$, such that $T_{L(C)} = l(p_1)_{L(C)}$, we obtain  $T,C,n \models (\PP(\kappa_1 \mid [A_1]) = \PP(\kappa_2 \mid [A_2]))$ iff $val(p)_1^T W val(p)_2 = 0$.

%Define a multidynamical automata $D = (|S|, S, S_0, \delta, i)$, where $S_0 = \{ s\mid i(s)>0 \}$, and for every $s,s'\in S$, such that $P(s,s')>0$, define $\delta(s,s') = (M^1_{l(s')_{AP_1}},M^2_{l(s')_{AP_2}})$.

Since every possible sequence of choices of temporal operators is considered, $M \models \chi \text{ iff}$ $D$ globally satisfies the quadratic form $W$. Thus, the Theorem follows from \Cref{multidynamic_lemma}.

\end{proof}

\section{Qualitative Fragment} \label{qualitative_section}
The qualitative fragment restricts the nesting of probabilistic operators such that every operator except the outermost may compare probabilities only to $0$ or $1$.

We define the \textit{qualitative fragment} of DTL by the following grammar.
\begin{align*}
\alpha &  \Coloneqq \psi \leq r \mid \alpha \wedge \alpha \mid \neg \alpha, \\
\psi& \Coloneqq \PP(\gamma) \mid \psi + \psi \mid r\cdot \psi, \\
\gamma&\Coloneqq a \mid \neg a \mid
\gamma \land \gamma \mid \gamma \vee \gamma \mid \LTLnext \gamma  \mid \gamma \LTLuntil \gamma \mid \gamma \LTLr \gamma 
 \mid  \PP  (\gamma \mid [A], B  )  =1 \mid \PP   (\gamma \mid [A], B  )  >0 .
\end{align*}

The qualitative fragment captures some of the properties discussed in section \Cref{environment,dist_and_prob_aut}. In particular, it can express stochastic environment properties in which inner conditional probabilities are compared only to 0 or 1, distributional properties in which the labeling function checks only whether entries of the distribution vector are 0 or 1, as well as properties of probabilistic automata in which the acceptance probability is required to be 0 or 1. As mentioned earlier, the latter is undecidable in the original setting with strict existential quantification over schedulers; we show below that it becomes decidable when moving to our \emph{soft} (probabilistic) quantification.

We call $\PP(\dots) = 1$ and $\PP(\dots) > 0$ \textit{universal} and \textit{existential} probabilistic operators, respectively.

\begin{definition}
The \emph{alternation depth} of a qualitative DTL formula is the maximum number
of alternations between existential and universal probabilistic operators along
any syntactic nesting path. Additionally, each occurrence of $\LTLuntil$ and
$\LTLr$ is counted as two further alternations.
\end{definition}

For $n,m,c\in \NN$ define $g_c(n+1,m) = c^{g_c(n,m)}$, and $g_c(0,m) = m$. We define $NSPACE(g(n,m))$ to be the class of languages that are accepted by a nondeterministic Turing machine that runs in space $O(g_c(n,m))$ for some $c>1$.

Below, we present the main result of this section. It relies on the notion of cascade compatibility, which generalizes properties of models constructed as cascade products of multiple components. A formal definition of this notion is given later in this section.

\begin{theorem} \label{qualitative}
    The model-checking problem for a qualitative DTL formula $\alpha$ of alternation depth $k$, cascade-compatible with a precisely labeled Markov chain $M$, is in $NSPACE(g(k+1,|\alpha|+|M|))$.%and in $NSPACE(g_c(k,|M|))$.
\end{theorem}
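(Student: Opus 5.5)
The plan is to build automata inductively on the structure of $\gamma$, in the style of the HyperCTL$^*$ model-checking algorithm. The induction maintains, for every qualitative subformula $\gamma$ and every cut pattern arising during evaluation, an alternating B\"uchi automaton $\mathcal{A}_\gamma$ over the product alphabet $S\times 2^{AP}$. Precise labeling lets me identify traces with paths of $M$. The invariant is that $\mathcal{A}_\gamma$ accepts a trace $T$ iff $T,C,n\models_\mu\gamma$, up to a $\mu$-null set.

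\textbf{Propositional and temporal cases.} Atoms, Boolean connectives, $\LTLnext$, $\LTLuntil$ and $\LTLr$ are handled by the standard linear-size alternating constructions. Temporal operators only move the position and leave the cut fixed, so these steps are routine.

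\textbf{Probabilistic operators.} These carry the core of the argument. Consider $\PP(\gamma\mid[A],B)=1$ evaluated at step $n$. The disintegration $\mu_{T_{L(C')}}$ must be made finite-state, and I would use cascade-compatibility to do so. I expect the definition to say that $M$ factors as a cascade product in which the propositions of $B$ are generated by an upstream component that does not observe the rest.

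Under that assumption, conditioning on the full $B$-trace together with the $A$-prefix up to $n$ should give the following. By \Cref{constructive_disintegration} and \Cref{constructive_prefix}, the conditional law is the law of a Markov chain obtained from the downstream component. This chain is started from the belief-support reached after the conditioned prefix and is driven blindly by the fixed $B$-trace.

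Qualitative statements about this law depend only on supports. So:
\begin{itemize}
\item $\PP(\cdot)=1$ holds iff almost every continuation consistent with the observation is accepted by the inner automaton;
\item $\PP(\cdot)>0$ holds iff some positive-measure set of such continuations is accepted.
\end{itemize}

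\textbf{Determinization and BSCC analysis.} To decide these support conditions, I first convert the inner ABA into a DRA, which costs one exponential. I then form its product with the relevant support/subset construction of $M$. Inside this product I classify bottom strongly connected components as accepting or not, following the standard PCTL$^*$ qualitative analysis, with an extra subset component that tracks which states are still consistent with the observed $A\cup B$-projection.

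The resulting set of "good" product states is a finite object. The outer automaton tests it by guessing, in universal or existential mode, the relevant hidden continuation synchronized with the visible $B$-trace. The outer automaton remains alternating, so universal and existential probabilistic operators of the same kind stack without new blow-up. A blow-up occurs only at an alternation, where we must determinize or complement. This is why $\LTLuntil$ and $\LTLr$ are counted as extra alternations: their fixed-point unfolding across cut updates forces an extra complementation.

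\textbf{Complexity and the outer formula.} Each alternation costs one exponential, so the automata for depth-$k$ subformulas have size $g(k,|\alpha|+|M|)$. The top level $\alpha$ is a linear combination of plain probabilities $\PP(\gamma)$. I compute each of these by solving a linear system on the product of $M$ with the final deterministic automaton, after one more exponential. Performing all of this on the fly in nondeterministic space gives $NSPACE(g(k+1,|\alpha|+|M|))$.

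\textbf{Main obstacle.} I expect the hardest step to be justifying that the disintegration with respect to an infinite $B$-trace coincides, $\mu$-a.e., with the blind-driven downstream chain. Its supports must be captured by a finite belief/BSCC construction, uniformly over a.e. conditioning trace. My first attempt would apply \Cref{constructive_disintegration} to finite prefixes, where Bayes' rule gives an explicit product form under the cascade factorization. I would then pass to the limit and use the a.e. uniqueness in \Cref{disintegration}, which shows that the candidate kernel is a valid disintegration.
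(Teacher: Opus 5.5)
Your proposal has a genuine gap at the step that does the real work, and it also mishandles cuts.

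\textbf{The support/BSCC step.} You identify the conditional law with a downstream chain driven by the fixed infinite $B$-trace. You then claim that its qualitative properties ``depend only on supports'' and can be read off a BSCC classification of a finite product with a subset component. For a fixed driving word this is false, because the driven chain is time-inhomogeneous and BSCC reasoning does not apply to it. For example, let the upstream chain emit i.i.d.\ letters $a,b$, and let the downstream component be the standard probabilistic B\"uchi automaton whose acceptance probability on $a^{k_1}ba^{k_2}b\cdots$ is $\prod_i(1-2^{-k_i})$. Every prefix of either word below leaves both the accepting loop and the sink reachable, so no prefix-support or BSCC test separates them. Yet the conditional probability is positive for $k_i=i$ and $0$ for $k_i=1$. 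A finite-state characterisation can therefore hold only for $\hat\mu_{L(C')}$-almost every conditioning value. Your Bayes/limit/uniqueness argument only identifies the kernel; it does not prove this a.e.\ characterisation. The paper gets it in three steps:
\begin{enumerate}
\item Do the BSCC analysis on the \emph{unconditioned}, homogeneous product $M\times D_\gamma$, whose cut and timestamp tracks are eventually constant. The limit theorem for finite Markov chains then gives $\mu(H\,\triangle\,H')=0$, where $H'$ means ``never enter a rejecting 0-BSCC''.
\item Transfer this to a.e.\ $T'$ using $\mu(E)=\int\mu_{T'}(E_{T'})\,\hat\mu_{L(C')}(dT')$.
\item Write $\overline{H'}=\bigcup_N K^N$, where each $K^N$ depends on the first $N$ steps. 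Then use \Cref{constructive_cascade} to replace $\mu_{T'}(K^N_{T'})$ by a finite-prefix conditional probability. This turns $\PP(\gamma\mid[A],B)=1$ into a safety condition on positive-probability prefixes consistent with $T_{L(C')}[N]$; dually, $>0$ becomes reachability of an accepting 0-BSCC.
\end{enumerate}
This is also the only way the hypothesis enters. Cascade-compatibility is not a factorisation of $M$ into upstream and downstream components. It is the semantic conditional-independence property of $\mu$, for each subformula, with respect to the propositions under a total condition. So there is no ``downstream chain'' to run; what the hypothesis provides is that the limit in \Cref{constructive_disintegration} is reached at a finite $N$.

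\textbf{Cuts, timestamps and the hidden past.} Your automata over $S\times 2^{AP}$, one per ``cut pattern'', cannot implement the semantics, for two reasons. First, cut values are unbounded: under $\LTLglobally$, $[A]$ sets $C'(a)=n$ for every $n$. Second, $U(C')$ contains the \emph{unconditioned past} of the trace, not just a continuation from the current state, and nested operators depend on it. Take $\PP(\PP(\LTLnext c\mid[p])>0\mid[a])=1$ evaluated at $n>0$ with $C(p)=0$. The inner operator conditions on the $p$-history of the hidden trace at positions $<n$. A run that has already moved its head to $n$ and keeps only a belief support over current states cannot recover this. So HyperCTL$^*$-style branching at the current state does not suffice. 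The paper instead feeds the cut and timestamp as extra input tracks over $\{0,\bot,\top\}^{AP}\times\{0,\bot\}$. $\LTLnext$ and $\LTLuntil$ act by shifting or guessing the timestamp track rather than by advancing the head. The automaton for a probabilistic operator then re-guesses an entire path of $M\times D_\gamma$ from position $0$, checking agreement with the input only on the coordinates conditioned by the updated cut.
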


We prove this Theorem by presenting an algorithm that, for every subformula $\gamma$, constructs an alternating B\"uchi automaton $B_\gamma$ that models the semantic evaluation of $\gamma$ on an input trace, cut, and timestamp, each encoded by infinite traces. This construction is inspired by the construction from the HyperCTL$^*$ algorithm~\cite{ClarksonFKMRS14}, combined with a reasoning about strongly connected components for the verification with respect to qualitative probabilistic formulas~\cite{DBLP:books/daglib/0020348}.

The automata construction is built recursively on the structure of formula $\gamma$, and uses the Miyano-Hyashi construction for every alternation between universal, existential quantifiers, or the temporal operators $\LTLu$ and $\LTLr$.

\subsection{Cascade property}
We now generalize the property satisfied by the cascade product
$M_1 \otimes M_2$, which will be used to derive a corollary of
\Cref{constructive_disintegration}.

Intuitively, the \emph{cascade property} expresses a conditional independence
assumption. Fix a prefix length $n$ and a prefix $T' \in (2^{AP})^n$.
The property states that, once the projection of $T'$ onto $AP'$ is fixed, the future behavior of the atomic propositions in $AP'$ is independent of
the remaining part of the prefix over $\overline{AP'}$.

\begin{definition}[Cascade property]
A measure $\mu \in \mathcal{P}((2^{AP})^\omega)$ satisfies the
\emph{cascade property} with respect to a set $AP' \subseteq AP$ if for every
$n \in \NN$, every prefix $T' \in (2^{AP})^n$, and every measurable set
$P \subseteq (2^{AP'})^\omega$, the following equality holds:
$$\mu((T'_{AP'} \times (2^{ \overline{AP'}})^n)(P\times(2^{ \overline{AP'}})^\omega)) \cdot \mu( T' (2^{AP})^\omega) = $$ $$= \mu(T'(P\times(2^{ \overline{AP'}})^\omega)) \cdot \mu((T'_{AP'} \times (2^{ \overline{AP'}})^n)(2^{AP})^\omega).$$
\end{definition}

Equivalently, conditioning on the full prefix $T'$ yields the same distribution
over future $AP'$-traces as conditioning only on the projected prefix
$T'_{AP'}$.

\begin{note}
Let $M = (S_1, AP_1, l_1, P, i_1)$ be a Markov chain and let
$MD = (S_2, 2^{AP_1}, AP_2, l_2, R, i_2)$ be a Markov decision process.
Then the measure induced by the cascade product $M \otimes MD$
satisfies the cascade property with respect to $AP_1$.
\end{note}

Given a cut $C \colon AP \to \NN_\omega$ and a number $n \in \NN$, we say that a
property $B \subseteq U(C)$ \emph{depends on the first $n$ steps} if for all
traces $T^1, T^2 \in (2^{AP})^\omega$,
$$
T^1[n] = T^2[n]
\quad\Rightarrow\quad
T^1_{U(C)} \in B \;\text{iff}\; T^2_{U(C)} \in B.
$$

The next result is a corollary of \Cref{constructive_disintegration}, which uses the cascade property to prove that, in certain cases, the limit is reached in finitely many steps.

\begin{corollary}\label{constructive_cascade}
Let $\mu \in \mathcal{P}((2^{AP})^\omega)$ be a measure satisfying the cascade
property with respect to $AP' \subseteq AP$.
Let $C \colon AP \to \NN_\omega$ be a cut and let $N \in \NN$ be such that
$C(a) = \omega$ for all $a \in AP'$ and $C(a) \le N$ for all
$a \in \overline{AP'}$.

Then for $\hat{\mu}_{L(C)}$-almost every $T' \in L(C)$, for every $n \ge N$ and
every property $B \subseteq U(C)$ that depends on the first $n$ steps,
$$
\frac{
  \mu\!\left(
    \{ T'' \in L(C) \mid T''[n] = T'[n] \} \times B
  \right)
}{
  \mu\!\left(
    \{ T'' \in L(C) \mid T''[n] = T'[n] \} \times U(C)
  \right)
}
=
\mu_{T'}(B).
$$
\end{corollary}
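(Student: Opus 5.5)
The plan is to show that the sequence of ratios in \Cref{constructive_disintegration} is eventually constant once $n \ge N$, so the almost-everywhere limit must equal that constant value. Write $R_m(B)$ for the ratio with $T'_m = \{T''\in L(C)\mid T''[m]=T'[m]\}$ in place of the $n$-slice. By \Cref{constructive_disintegration}, for $\hat\mu_{L(C)}$-a.e.\ $T'$ we have $R_m(B)\to \mu_{T'}(B)$ for each fixed measurable $B$. There are only countably many pairs $(n,B)$ with $B$ depending on the first $n$ steps: such $B$ is determined by a subset of a finite set of length-$n$ prefixes restricted to $U(C)$. Intersecting countably many full-measure sets, we may therefore fix one $T'$ that works for all such $B$ simultaneously, and for which all denominators are positive.

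The key step is this claim: if $n\ge N$, $B$ depends on the first $n$ steps, and $m\ge n$, then $R_m(B)=R_n(B)$. Letting $m\to\infty$ then gives $R_n(B)=\mu_{T'}(B)$.

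To prove the claim, note that for $m\ge n\ge N$ the set $T'_m$ fixes exactly the following: the coordinates of $L(C)$ over $\overline{AP'}$, all of which are already fixed within the first $N$ steps, together with the first $m$ letters of the $AP'$-trace. Since $B$ depends only on the first $n$ steps, $T'_m\times B$ equals $\{T : T[m]\in \mathcal{S}\}$ for a finite set $\mathcal{S}$ of length-$m$ prefixes. These prefixes all share the $AP'$-projection $T'_{AP'}[m]$, and their $\overline{AP'}$-parts are constrained only through the first $n$ steps (the cut part up to $N$, and $B$ up to $n$), leaving positions $n..m-1$ of $\overline{AP'}$ free.

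Next we compute $\mu(T'_m\times B)/\mu(T'_n\times B)$. The event $T'_m$ is $T'_n$ intersected with the $AP'$-letters at positions $n..m-1$ being prescribed. Take $P\subseteq(2^{AP'})^\omega$ to be the cylinder of $T'_{AP'}[m]$, and for each length-$n$ prefix $\tau$ compatible with $T'_n\times B$ apply the cascade property at $\tau$. This gives
\[
\mu(\tau(P\times(2^{\overline{AP'}})^\omega)) = \mu(\tau(2^{AP})^\omega)\cdot c,
\]
where $c$ is the conditional probability of $P$ given the $AP'$-projection $\tau_{AP'}=T'_{AP'}[n]$. The constant $c$ is independent of $\tau$. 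Summing over the $\tau$ making up $T'_n\times B$ yields $\mu(T'_m\times B)=c\,\mu(T'_n\times B)$. Applying the same argument with $B=U(C)$, which depends on the first $n$ steps trivially, gives $\mu(T'_m\times U(C))=c\,\mu(T'_n\times U(C))$. The factor $c$ cancels, and $c>0$ by the a.e.\ positivity of denominators. This proves the claim.

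The main obstacle is the bookkeeping in this last step. One must set up the identification of $T'_m\times B$ as a disjoint union of cylinders over length-$n$ prefixes intersected with a common $AP'$-future event, and this relies on $n\ge N$: the $\overline{AP'}$-part of the cut must already be fixed by time $n$. It also requires matching the cascade-property equation exactly to the form ``conditional probability of $P$ given $\tau$ equals conditional probability given $\tau_{AP'}$''.
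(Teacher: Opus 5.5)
Your proposal is correct and follows the paper's proof. Both use almost-everywhere convergence from Theorem~\ref{constructive_disintegration}, stabilization of the ratios for $m \ge n \ge N$ via the cascade property, and a countable intersection over the pairs $(n,B)$; your decomposition of $T'_m \times B$ into length-$n$ cylinders spells out the step the paper states in one line.

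One small fix: the cascade property concatenates $P$ after the length-$n$ prefix $\tau$. So $P$ should be the cylinder of the $AP'$-letters of $T'$ at positions $n,\dots,m-1$, not the cylinder of $T'_{AP'}[m]$ itself.
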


\begin{proof}
    For $n' \in \NN$, define
$$
T'_{n'} = \{ T'' \in L(C) \mid T''[n'] = T'[n'] \}.
$$

By Theorem~\ref{constructive_disintegration}, for fixed $n$ and $B$, for
$\hat{\mu}_{L(C)}$-almost every $T' \in L(C)$,
$$
\frac{\mu(T'_{n'} \times B)}{\mu(T'_{n'} \times U(C))}
\xrightarrow[n' \to \infty]{}
\mu_{T'}(B).
$$
    
   By the cascade property, for every $n' \ge n$,
$$
\frac{\mu(T'_{n'} \times B)}{\mu(T'_{n'} \times U(C))}
=
\frac{\mu(T'_n \times B)}{\mu(T'_n \times U(C))}.
$$
Hence, the sequence stabilizes from $n$ onward, and we obtain
$$
\frac{\mu(T'_n \times B)}{\mu(T'_n \times U(C))} = \mu_{T'}(B).
$$

    For fixed $n$ and $B$, this equality holds on a set of $T' \in L(C)$ of
$\hat{\mu}_{L(C)}$-measure one.
Taking a countable intersection over all such $n$ and $B$ preserves measure
one, which concludes the proof.
\end{proof}

We say that a subformula is in the \emph{scope of a total condition on $b$} if it occurs within a probabilistic operator of the form $\PP(\psi \mid [A], B)$ with $b \in B$.
\begin{definition}
    A DTL formula $\phi$ is \textit{cascade-compatible} with a measure $\mu\in \mathcal{P}((2^{AP})^\omega)$ if for every subformula $\phi'$ of $\phi$, $\mu$ satisfies the cascade property with respect to the set 
    $$\{ b\in AP\mid \phi' \text{ is in the scope of a total condition on } b\}.$$
A DTL formula is cascade-compatible with a Markov chain if it is cascade-compatible with a measure induced by that Markov chain.
\end{definition}

Since the cascade property is a conditional-independence requirement, it can be
expressed in a linear fragment of DTL similarly to probabilistic non-interference. Let $M$ be a Markov chain over $AP$ and let $AP' \subseteq AP$. Then $M$ satisfies the
cascade property with respect to $AP'$ iff, for every formula $x  \in 2^{AP'}$, we have
$$M \models \PP\Bigl( \LTLglobally \bigl( \PP(\LTLnext x \mid [AP]) = \PP(\LTLnext x \mid [AP']) \bigr) \Bigr) = 1.$$
Thus, the cascade property can be checked in polynomial time. It follows that the cascade compatibility of a given Markov chain and formula can also be checked in polynomial time.

Hence, \Cref{qualitative} applies directly to systems constructed as cascade products. Moreover, it extends to the more general class of systems that are cascade-compatible with a given formula, and this property can be checked in polynomial time by \Cref{linear_alg}.

\subsection{Automata construction} \label{automata_construction}
Let $M= (S, AP,l, P, i)$ be a precisely labeled Markov chain, and $\gamma$ be a qualitative DTL formula defined by the grammar above. We define an alternating B\"uchi automaton $B_{\gamma, M }$ recursively on the structure of $\gamma$.

The automaton is over the alphabet $\Sigma = 2^{AP} \times \{ 0,\bot,\top\}^{AP} \times \{ 0,\bot\} $. Intuitively, we interpret an infinite trace over $\{ 0,\bot, \top\}$ as an element of $\NN_\omega$. For $n\in \NN$, a trace that starts with $0^{n}\bot $ represents $n$ and a trace that starts $0^n \top $ represents $\omega$. In the same way, an infinite trace over $\{0,\bot\}$ encodes an element of $\NN$.

 This way, an infinite trace over $\{ 0,\bot,\top\}^{AP}$ represents a cut, and an infinite trace over $\{ 0,\bot\}$ represents a timestamp. 
Intuitively, $B_{\gamma, M}$ models the evaluation of $\gamma$ for a given trace, a cut, and a timestamp.

Everywhere in this construction $\sigma\in \Sigma$, $t\in 2^{AP}$, $\vec{v}\in \{ 0,\bot,\top \}^{AP}$, and $b\in \{ 0,\bot \}$.

For $a\in AP$, define $B_{a,M} = (\{ q_0,q_{acc},q_{rej} \}, \Sigma, \delta_{a,M},q_0,q_{acc})$, where 
$$\delta_{a,M}(q_{rej}, \sigma ) = q_{rej}, \ \ \ \ \ \delta_{a,M}(q_{acc},\sigma ) = q_{acc},  $$
$$ \delta_{a,M}(q_0, (t,\vec{v}, 0)) = q_0,$$ $$ \delta_{a,M}(q_0, (t,\vec{v},\bot)) = \begin{cases}
    q_{acc}, \ \ \ \text{if } a\in t, \\
    q_{rej}. \ \ \ \text{else.}
\end{cases}$$

The automaton $B_{\neg a, M}$ is defined in the same way, with the only difference that
$$\delta_{\neg a,M}(q_0, (t,\vec{v},\bot)) = \begin{cases}
    q_{acc}, \ \ \ \text{if } a\notin t, \\
    q_{rej}, \ \ \ \text{else.}
    \end{cases}$$

The automata $B_{\gamma_1\wedge \gamma_2,M}$ and $B_{\gamma_1\vee \gamma_2,M}$ are constructed as an intersection and union of $B_{\gamma_1,M}$ and $B_{\gamma_2,M}$.

The automaton $B_{\LTLnext \gamma, M}$ is defined using the automaton $B_{\gamma,M} = (Q,\Sigma, \delta_\gamma, {q_0}_\gamma, F)$ as 
$B_{\LTLnext \gamma, M} = (Q\times \{ q_0,q_\bot\}, \Sigma, \delta, {q_0}_\gamma \times \{q_0\}, F\times \{q_0, q_\bot\})$, where for $q\in Q$, $i\in \{ 0,\bot\}$,
$$\delta((q,q_i), (t,\vec{v},b)) = \delta_\gamma(q,(t,\vec{v},i)) \times \{ q_b \} .$$

The automaton $B_{\gamma_1 \LTLuntil \gamma_2,M}$ is defined using the automata $B_{\gamma_1,M} = (Q_1,\Sigma, \delta_1, {q_0}_1, F_1)$ and $B_{\gamma_2,M} = (Q_2,\Sigma, \delta_2, {q_0}_2, F_2)$. We assume that $B_{\gamma_1,M}$ and $B_{\gamma_2,M}$ are existential. Otherwise, we turn them into existential using Miyano-Hayashi construction~\cite{Miyano1984AlternatingFA}.

$B_{\gamma_1 \LTLuntil \gamma_2,M} = ((2^{Q_1}\times Q_2\times \{ q_0,q_U\}) \sqcup Q_1\sqcup Q_2, \Sigma, \delta, \{{q_0}_1\} \times {q_0}_2 \times \{ q_0 \},  F_1 \sqcup F_2)$.
Here, for $Q_1'\subseteq Q_1$, $q_1\in Q_1$ and $q_2\in Q_2$ we define 
$$\delta((Q_1', q_2, q_0),(t,\vec{v},0))  =\bigvee_{q_2'\in \delta_2(q_2,(t,\vec{v},0))}(\bigcup_{q_1\in Q_1'} \delta_1(q_1,(t,\vec{v},0)), q_2', q_0),$$
$$\delta((Q_1', q_2, q_0),(t,\vec{v},\bot)) =((\bigvee_{q_1\in Q_1' ;  \ q_1'\in \delta_1(q_1,(t,\vec{v},\bot))} q_1')\wedge $$
$$ \wedge (\bigvee_{q_2'\in \delta_2(q_2,(t,\vec{v},0))} (\bigcup_{q_1\in Q_1'} \delta_1(q_1,(t,\vec{v},0)),q_2',q_U))) \vee (\bigvee_{q_2'\in \delta_2(q_2,(t,\vec{v},\bot))} q_2'),$$
$$\delta((Q_1', q_2, q_U),(t,\vec{v},b)) = \delta((Q_1', q_2, q_0),(t,\vec{v},\bot)),$$
$$\delta(q_1,(t,\vec{v}, b)) = \delta_1(q_1,(t,\vec{v}, b)),$$
$$\delta(q_2,(t,\vec{v}, b)) = \delta_2(q_2,(t,\vec{v}, b)).$$

The automaton $B_{\gamma_1 \LTLr\gamma_2, M}$ is defined as the complement of $B_{(\neg \gamma_1) \LTLu (\neg \gamma_2),M}$, since $\gamma_1 \LTLr\gamma_2=\neg((\neg \gamma_1) \LTLu (\neg \gamma_2))$.

We define automata $B_{\PP(\gamma \mid A,B) = 1,M}$ and $B_{\PP(\gamma \mid A,B) >0,M}$, using the automaton $B_{\gamma, M} $. First, we convert $B_{\gamma, M}$ into a deterministic Rabin automaton $D_{\gamma,M}$.
Next, intuitively, we take a product of $M = (S, AP, l, P, i)$ and $D_{\gamma, M} = (Q_{\gamma}, \Sigma, \delta_\gamma, {q_0}_\gamma, Acc_\gamma)$, interpreting a cut $C$ and a timestamp $n$ as an input. Formally, define an MDP $MD = (S\times Q_\gamma, \{ 0,\bot, \top\}^{AP}\times \{0,\bot \}, AP,l_{MD} , R, i_{MD})$, where $l_{MD}(s,q) = l(s)$, and
$$R((s,q), (\vec{v},b), (s',q')) = \begin{cases}
    P(s,s'), \ \text{if } \delta_\gamma(q,(l(s),\vec{v},b)) = q', \\
    0, \ \text{otherwise}.
\end{cases}$$
$$i_{MD} (s,q) = \begin{cases}
    i(s), \text{ if } q={q_0}_\gamma,\\
    0, \text{ otherwise.}
\end{cases}$$

We call a set $O$ of states of $MD$ \textit{0-strongly connected} if for any $o,o' \in O$, there exists a sequence $o_1,\dots,o_k$, such that $o_1 = o$, $o_k = o'$, and for any $1\le j < k$, $o_j\in O$ and $R(o_j,0^{|AP|+1}, o_{j+1})>0$. A set $O$ is called \textit{0-Bottom strongly connected component} (0-BSCC) if it is a maximal 0-strongly connected set, and for any $o\in O$ and $e\notin O$, $R(o, 0^{|AP|+1},e) = 0$.

We call 0-BSCC $O$ of $MD$ \textit{accepting}, if there exists $(L_i,K_i) \in Acc_\gamma$, such that $O\cap (S\times L_i) = \emptyset$ and $O\cap (S\times K_i)\neq \emptyset$. Otherwise, we call $BSCC$ $O$ \textit{rejecting}. Denote the union of all the accepting $BSCC$ as $ACC$ and the union of all the rejecting $BSCC$ as $REJ$.

Finally we define $B_{\PP(\gamma \mid [A],B) = 1, M} = ((S\times Q_\gamma \times \{ 0,1,\bot,\top\}^{AP} \times \{ 0, \bot\})\cup \{ q_{sink}\}, \Sigma, \delta, q_0,F^{=1})$ as a universal safety automaton, where
$$q_0 = \{ (s,q,0^{AP},0) \mid i_{MD}(s,q) > 0 \}.$$

To define $\delta$, first we need to define the following. Given $\vec{v},\vec{v'}\in \{ 0,1,\bot,\top\}^{AP}$ and $b,b'\in \{0,\bot \}$ define $b'' = 0$, if $b=b'=0$, otherwise $b''=\bot$. Moreover, define $\vec{v''}\in \{ 0,1,\bot,\top\}^{AP}$, such that for every $a\in AP$,
$$\vec{v''}_a = \begin{cases}
    \top, \text{ if } (a\in B) \vee  (\vec{v}_a=\top) \vee (( \vec{v'}_a = \top) \wedge (\vec{v}_a =0)), \text{ else }\\
    0, \text{ if } (\vec{v}_a = \vec{v'}_a =0),  \text{ else }\\
    1, \text{ if } ((\vec{v}_a = 1)\vee (\vec{v'}_a = \bot))\wedge (a\in A) \wedge (b'' = 0), \text{ else} \\
    \bot.
\end{cases}$$
Define $\vec{v''_0} \in \{0,\bot,\top \}^{AP}$  as $\vec{v''}$ with $1$ replaced by $0$.

For $s,s'\in S$, $q,q'\in Q$ and $t\in 2^{AP}$, such that for every $a$, the implication $\vec{v''}_a \in \{ 0,\top\} \Rightarrow l(s)_a = t_a$ holds, define
$$\delta((s,q,\vec{v},b),(t,\vec{v'},b'))=\{ (s',q',\vec{v''},b'') \mid  R((s,q),(\vec{v''_0},b''),(s',q')) > 0\}, $$
Otherwise, define $\delta((s,q,\vec{v},b),(t,\vec{v'},b')) = q_{sink}$. Moreover, $\delta(q_{sink}) = \{ q_{sink}\}$.
%$$F^{=1} = (((S\times Q_\gamma) \setminus REJ) \times \{ \bot,\top\}^{AP}) \cup $$ 
%$$ \cup(S\times Q_\gamma \times (\{ 0,\bot,\top\}^{AP} \setminus \{ \bot, \top \}^{AP} )  ) \cup \{ q_{acc} \}.$$
$$F^{=1} = \overline{ REJ \times \{\bot,\top \}^{AP}\times \{\bot \}}$$
Now let us define $B_{\PP(\gamma \mid [A],B) >0, M} = ((S\times Q_\gamma \times \{ 0,1,\bot,\top\}^{AP} \times \{ 0,\bot\})\cup \{ q_{sink}\}, \Sigma, \delta , {q_0} ,F^{>0} )$ as an existential reachability automaton, with $q_0$ and $\delta$ defined in the same way as in $B_{\PP(\gamma \mid [A],B) =1, M}$, but
$$F^{>0} = ACC \times \{\bot,\top \}^{AP}\times \{\bot \}.$$

\subsection{Correctness and the algorithm}

\begin{lemma} \label{qualitative_lemma}
    Let $n\in \NN$ be a timestemp, $C: AP \to \NN_{\omega}$ be a cut. Let $\pi_n \in 0^{n}\bot \{ 0, \bot\}^\omega$, and  $\pi_C \in( \Pi_{a\in AP; C(a)<\omega} 0^{C(a)} \bot \{0,\bot, \top  \}^\omega) 
    \times (\Pi_{a\in AP; C(a) = \omega} 0^* \top \{0,\bot,\top \}^\omega)$ be traces. Then for $\mu$-a.e. trace $T\in (2^{AP})^\omega$,
    $$T,C,n\models_\mu \gamma \ \text{ iff } \ (T,\pi_C,\pi_n) \in \mathcal{L}(B_{_{\gamma,M}}).$$
\end{lemma}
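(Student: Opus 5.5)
We argue by structural induction on $\gamma$, following the order of the automata construction in \Cref{automata_construction}. For each subformula we prove the stated equivalence for all $n$, $C$, and encodings $\pi_n,\pi_C$ of the prescribed shape. The exceptional $\mu$-null set is allowed to depend on $(n,C)$. This is harmless because there are only countably many timestamps and countably many cuts $C\colon AP\to\NN_\omega$, so the countable union of exceptional sets is still null. This uniformity is needed whenever a temporal operator moves the timestamp.

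\textbf{Base and propositional cases.} For $a$ and $\neg a$, the automaton reads $0$ in the timestamp component until the first $\bot$, which occurs at position $n$, and then checks $a\in T(n)$. Hence the equivalence holds for every $T$. Conjunction and disjunction follow from the induction hypothesis, since a union of two null sets is null.

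\textbf{Temporal cases.} For $\LTLnext\gamma$, the automaton delays the timestamp track by one letter. A word $0^n\bot\cdots$ thus becomes $0^{n+1}\bot\cdots$ for $B_{\gamma,M}$, and we apply the hypothesis at $n+1$. For $\gamma_1\LTLuntil\gamma_2$, we check that the construction guesses the witness position $n+i$. The component $q_0$ tracks the original timestamp, and after the first $\bot$ each subsequent step spawns, conjunctively, a copy of $B_{\gamma_1,M}$ (on the encoding of $n+j$) and continues, or it switches to $B_{\gamma_2,M}$. So the run corresponds to the semantic clause with existentially chosen $i$ and universally checked $j<i$, with each $(n+j)$ encoded by shifting $\bot$. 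A.e. correctness follows from the countable-union argument above. $\LTLr$ is handled by complementation, since complements of a.e.-equal sets are a.e.-equal.

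\textbf{Probabilistic case (main obstacle).} Let $C'$ be the updated cut. The automaton $B_{\PP(\gamma\mid[A],B)=1,M}$ simulates the product MDP $MD$ of $M$ with the Rabin automaton $D_{\gamma,M}$. It keeps the trace letters fixed on the coordinates where $\vec{v''}\in\{0,\top\}$, i.e. on those revealed by $C'$: values $0$ and $1$ mark "still conditioned", and $\top$ marks a total condition. On the remaining coordinates it branches over all successor states of positive probability. Because $M$ is precisely labeled, paths of $M$ correspond one-to-one to traces. Each branch of the universal automaton is therefore a trace $T''$ with $T''_{L(C')}$ equal to $T_{L(C')}$ on the prefix read so far.

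We compute $\mu_{T_{L(C')}}(\{t : (T_{L(C')},t),C',n\models\gamma\})$ via \Cref{constructive_cascade}. Cascade compatibility gives the cascade property for the totally conditioned set $B$ (together with the earlier total conditions). For a.e. $T$, the disintegration measure therefore equals the finite-prefix conditional ratio at any $N$ beyond all finite values of $C'$. Beyond that point only the $B$-coordinates are constrained, and they are constrained by the actual trace.

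The key claim is the following. Once the cut and timestamp encodings have both reached $\bot/\top$, so that $D_{\gamma,M}$ reads constant letters $\vec{v''_0},b''=\bot$ and the $0$-transitions of $MD$ apply, almost every conditioned continuation ends in a 0-BSCC, and acceptance is determined by whether that 0-BSCC is accepting. This is the standard qualitative Markov-chain argument of~\cite{DBLP:books/daglib/0020348}. By the induction hypothesis, $D_{\gamma,M}$ agrees with $\gamma$ a.e.

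We then get the two characterisations:
\begin{itemize}
\item The conditional probability equals $1$ iff no finite branch consistent with the conditioning reaches $REJ$ after the encodings have stabilised. This is exactly the universal safety condition $F^{=1}$.
\item The conditional probability is positive iff some consistent branch reaches $ACC$. This is the existential reachability condition $F^{>0}$.
\end{itemize}

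The delicate point is the totally conditioned coordinates. We must show that conditioning on an infinite $B$-trace does not destroy the BSCC argument. Here we use the cascade property again: conditioned on the $B$-trace, the residual process on the remaining coordinates still reaches bottom components with probability $1$. This requires that each reachable positive-probability prefix has positive conditional probability for $\hat\mu_{L(C')}$-a.e. $T$. Finally, we discharge the exceptional null sets, over countably many prefixes and cuts, by a countable union, as in the proof of \Cref{constructive_cascade}.
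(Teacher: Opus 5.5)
\textbf{Verdict: genuine but easily repaired gap.} Your overall architecture matches the paper's proof. It uses structural induction with countable unions of null sets for the temporal cases. For the probabilistic operator it uses the product of $M$ with $D_{\gamma,M}$, a BSCC argument, \Cref{constructive_cascade} to turn conditional probabilities of finite-horizon events into finite-prefix ratios, and the $REJ$/$ACC$ characterisations.

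\textbf{The gap.} It sits exactly at the point you flag as delicate. You need that, for $\hat\mu_{L(C')}$-a.e.\ $T'$, $\mu_{T'}$-almost every continuation ends in a BSCC. You also need that $D_{\gamma,M}$ agrees with $\gamma$ $\mu_{T'}$-a.e. You propose to obtain this from the cascade property together with positivity of the conditional probabilities of reachable prefixes, and that does not work. Under $\mu_{T'}$ the process is conditioned on an infinite $B$-trace, so it is in general no longer a time-homogeneous finite Markov chain, and the standard qualitative argument cannot be rerun. Positivity of individual prefix probabilities also gives no uniform lower bound on reaching a BSCC. Moreover, the claim is inherently an a.e.\ one: on $\hat\mu_{L(C')}$-null sets of $B$-traces the conditioned process may well stay transient.

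\textbf{The missing step.} What you need is the disintegration identity itself. Let $E$ be the set of traces that either are not images of positive-probability paths visiting every state of some BSCC infinitely often, or on which the induction hypothesis fails. The standard argument applied to the \emph{unconditioned} finite chain gives $\mu(E)=0$. Then $0=\mu(E)=\int_{L(C')}\mu_{T'}(E_{T'})\,\hat\mu_{L(C')}(dT')$ forces $\mu_{T'}(E_{T'})=0$ for a.e.\ $T'$. This is how the paper obtains $\mu_{T'}(H_{T'})=\mu_{T'}(H'_{T'})$ a.e. The cascade property is needed only afterwards, through \Cref{constructive_cascade} applied to the finite-horizon events $K^N$. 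That is what makes the test ``no positive-probability prefix consistent with $T_{L(C')}[N]$ reaches $REJ$'' exact. Your closing countable-union remark does not cover this: the issue is transferring a $\mu$-null set to $\mu_{T'}$-null sets, not combining countably many null sets.

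\textbf{A smaller point.} After stabilisation, the letters fed to $D_{\gamma,M}$ are $(\vec{v''_0},\bot)$ with entries in $\{\bot,\top\}$. These differ from the letter $0^{|AP|+1}$ that defines 0-BSCCs, so ``the $0$-transitions of $MD$ apply'' does not follow as stated. The paper avoids this as follows. The induction hypothesis holds for every encoding of the prescribed shape, so it fixes encodings $\pi_{C'},\pi'_n$ that are eventually $0^\omega$. Then $MD(\pi_{C'},\pi'_n)$ is a finite Markov chain inducing $\mu$ whose BSCCs are exactly 0-BSCCs of $MD$. You should make this choice of encoding explicit.
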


\begin{proof}
    We prove the statement by structural induction on $\gamma$.
    
    \textbf{Base case.} For $\gamma = a$, or $\gamma = \neg a$  with $a\in AP$, the automaton $B_{\gamma, M}$ waits until the first step where the last component of the input equals $\bot$ (which occurs at position $n$ by the choice of $\pi_n$) and then accepts iff $a\in T(n)$ (respectively, iff $a\notin T(n)$). Hence, the claim holds (for every $T\in (2^{AP})^\omega$) by the definition of the semantics.

    \textbf{Induction step (Boolean operators and $\LTLnext$).} For $\gamma_1\wedge \gamma_2$ or $\gamma_1 \vee \gamma_2$ immediately from the induction hypotheses.

    For $\LTLnext \gamma$, by construction of $B_{\LTLnext \gamma, M}$, for $\mu$-a.e. $T$, $$(T,\pi_C, \pi_n)\in \mathcal{L}(B_{\LTLnext \gamma,M}) \text{ iff } (T,\pi_C, 0\pi_n)\in \mathcal{L}(B_{\gamma,M}).$$

    Thus, the claim follows from the induction hypothesis for $\gamma$.

    \textbf{Induction step ($\LTLuntil$ and $\LTLr$).} Consider $\gamma_1 \LTLu \gamma_2$ and assume that $B_{\gamma_1, M}$ and $B_{\gamma_2, M}$ are existential automata. 

    For $i\in \NN$, denote as $\pi_{n+i}$, the trace $\pi_n$ with prefix of length $n+i$ replaced by $0^{n+i-1}\bot$.
    
    By the induction hypothesis and the definition of the semantics, for $\mu$-a.e. $T$,
    $T,C,n \models_\mu \gamma_1 \LTLu \gamma_2$ iff there exists $i$ such that there exists an accepting run of $B_{\gamma_2,M}$ on $(T,\pi_C,\pi_{n+i})$ and, 
    for every $j<i$, there exists an accepting run of $B_{\gamma_1,M}$ on $(T,\pi_C, \pi_{n+j})$. Note that the induction hypothesis works $\mu$-a.e. for individual $i$ and $j$, but since the number of possible values of $i$ and $j$ is countable, we can conclude that it works $\mu$-a.e. for all $i$ and $j$.

    On the other hand, $(T,\pi_C,\pi_n)\in \mathcal{L}(B_{\gamma_1 \LTLuntil \gamma_2,M})$
    iff there exists an accepting run of the constructed automaton on $(T,\pi_C,\pi_n)$.
    By construction, the run nondeterministically guesses the witness position $i$
    at which $\gamma_2$ should hold, while ensuring that $\gamma_1$ holds at all
    earlier positions. This is equivalent to the existence of $i$ satisfying the
    two conditions above. Hence, the claim follows.

    The case of $\LTLr$ follows from the duality $\gamma_1 \LTLr \gamma_2=\neg((\neg \gamma_1) \LTLu (\neg \gamma_2))$.

    \textbf{Induction step (Probabilistic operators).} Finally, let us prove the induction step for $ \PP ( \gamma \mid [A], B) = 1$. Note that since $D_{\gamma,M}$ is a determinized $B_{\gamma,M}$ the induction hypothesis holds for $D_{\gamma,M}$.

    Recall the definition of $C'$ in \cref{eq_s'}. Let $\pi_{C'} \in( \Pi_{a\in AP; C'(a)<\omega} 0^{C'(a)} \bot \{ 0,\bot,\top \}^* 0^\omega) 
    \times (\Pi_{a\in AP; C'(a) = \omega} 0^* \top \{ 0,\bot,\top \}^* 0 ^\omega)$ and let $\pi'_n \in 0^n\bot \{ 0,\bot\}^* 0^\omega$.

    Note that $(\pi_{C'}, \pi'_n)$ can be modeled deterministically with a finite set of states, thus $MD(\pi_{C'}, \pi'_n)$ can be viewed as a finite state Markov chain. Clearly, $MD(\pi_{C'},\pi'_n)$ induces the same measure $\mu$ on $(2^{AP})^\omega$ as $M$. 

    Any finite path of $MD(\pi_{C'}, \pi'_n)$, that has a positive probability is uniquely determined by its $S$ coordinate corresponding to a state of $M$, since $D_\gamma$ is deterministic and  $(\pi_{C'},\pi_n')$ is modeled deterministically. Since $M$ is precisely labeled, the labeling is injective on the set of finite paths of $MD(\pi_{C'}, \pi'_n)$ with positive probability.
    
    By the Limit Behaviour of Markov Chains~\cite[Theorem~10.27]{DBLP:books/daglib/0020348}, almost every path of $MD(\pi_{C'},\pi'_n)$ visits every state of some BSCC infinitely often. Note that by the construction of $(\pi_{C'},\pi_n')$, BSCCs of $MD(\pi_{C'},\pi'_n)$ are 0-BSCCs of $MD$.

Let \(E \subseteq (2^{AP})^\omega\) be the set of traces \(T_1\) such that either \(T_1\) is not the image, under the labeling function, of a path of \(MD(\pi_{C'}, \pi'_n)\) that takes only positive-probability transitions and visits every state of some BSCC infinitely often, or the equivalence
$$
  (T_1, C', n \models_\mu \gamma
  \;\longleftrightarrow\;
  (T_1, \pi_{C'}, \pi'_n) \in \mathcal{L}(D_{\gamma,M}))
$$
fails.

    By the~\cite[Theorem~10.27]{DBLP:books/daglib/0020348} and the induction hypothesis, $E$ s the union of two measure-zero sets and therefore itself has measure zero.

    Thus,
    $$\int_{L(C')} \int_{U(C')} \mathbf{1}_E(T'
    ,t)  \mu_{T'}(dt) \hat{\mu}'_{L(C')}(dT')  = \mu(E)=0.$$
    Denote as $H\subseteq (2^{AP})^\omega$ the set of traces $T_1$, such that 
    $$T_1, C', n \models_\mu \gamma$$

    Denote as $H'\subseteq (2^{AP})^\omega$ the set of traces $T_1$, such that $T_1$ is an image, under the labeling function, of a path that takes only positive-probability transitions and never visits $REJ$. 

    Hence,
    $$\int_{L(C')} \mid (\int_{U(C')} \mathbf{1}_{H}(T',t)\mu_{T'}(dt)) -(\int_{U(C')} \mathbf{1}_{H'}(T',t)\mu_{T'}(dt)) \mid \hat{\mu}_{L(C')}(dT') \le $$
    $$\le \int_{L(C')} \int_{U(C')}  \mid \mathbf{1}_{H}(T',t) - \mathbf{1}_{H'}(T',t) \mid \mu_{T'}(dt)   \hat{\mu}_{L(C')}(dT') \le$$
    $$\le \int_{L(C')} \int_{U(C')} \mathbf{1}_E(T',t)  \mu_{T'}(dt) \hat{\mu}_{L(C')}(dT') = 0,$$
    where the first inequality is the triangle inequality and the second inequality follows from the definitions $E$, $H$, and $H'$.
    Hence, for $\hat{\mu}_{L(C')}$-a.e. $T'\in L(C')$,
    $$\int_{U(C')} \mathbf{1}_{H}(T',t)\mu_{T'}(dt) = \int_{U(C')} \mathbf{1}_{H'}(T',t)\mu_{T'}(dt).$$
    $$\mu_{T'}(H_{T'}) = \mu_{T'}(H'_{T'}).$$
    
    Thus, for $\mu$-a.e. $T \in (2^{AP})^\omega$,
    $$T,C',n \models_\mu \PP(\gamma \mid [A],B)=1   \text{ iff } $$ $$   \mu_{T_{L(C')}}   ( t\in U(C') \mid (T_{L(C')},t),C',n \models_\mu\gamma ) = 1 \text{ iff }$$
    $$1= \mu_{T_{L(C')}}(H_{T_{L(C')}}) = \mu_{T_{L(C')}}(H'_{T_{L(C')}}) $$
    where the last equality holds $\mu$-a.e. 

    For $N\in \NN$, let $K^N$ denote a set of traces $T_1$, such that $T_1$ is an image under the labeling function, of a path that takes only positive-probability transitions and visits $REJ$ within the first $N$ steps. Clearly,
    % $\bigcup_{N\in \NN} K^N = \overline{H'}$. Hence, 
    $\mu_{T_{L(C')}}(H'(T_{L(C')}))= 1$ iff for every $N\in \NN$, $\mu_{T_{L(C')}}(K^N(T_{L(C')})) = 0$.

    Note that for every $N$ the property $K^N$ depends on the first $N$ steps, since the labeling is injective on finite paths with positive probability.  

    Since $\mu$ satisfies the cascade property, \Cref{constructive_cascade} implies that for $\mu$-a.e. $T$,
    $$\mu_{T_{L(C')}}(K^N_{T_{L(C')}}) = \frac{\mu(T^1\in K^N \mid T^1_{L(C')}[N] = T_{L(C')}[N]) }{\mu(T^1\in (2^{AP})^\omega \mid T^1_{L(C')}[N] = T_{L(C')}[N])}.$$
    
    Thus, $\mu_{T_{L(C')}}   (H') = 1$ iff for any prefix $T_{L(C')}[N]$ there exists no finite trace $p\in (2^{AP})^N$, such that $p_{L(C')} = T_{L(C')}[N]$, $p(2^{AP})^\omega \subseteq K^N$, and $\mu(p(2^{AP})^\omega )>0$. 

    By the construction $B_{\PP(\gamma \mid [A],B) = 1, M}$, receives $\pi_C$ and $\pi_n$ and universally checks for the absence of such $p$, hence the statement follows.

    The proof for $B_{\PP(\gamma \mid [A],B) >0, M}$ is analagous.
\end{proof}

\begin{proof}[proof of \Cref{qualitative}]
    %Define a cut $C_0:AP\to \NN_\omega$, such that $C_0(a) = 0$ for every $a\in AP$.
    Let $M$ be a Markov chain and $\alpha$ a qualitative DTL formula. By definition, $\alpha$ is a Boolean combination of properties of the form $\PP(\gamma)\le r$. Using the construction described in \Cref{automata_construction}, we obtain a B\"uchi automaton $B_{\gamma,M}$ and then translate it into a deterministic Rabin automaton $D_{\gamma,M}$ for each such $\gamma$. We now analyze the complexity of this construction.

    Translating an alternating B\"uchi automaton to a deterministic Rabin
automaton incurs a doubly-exponential blow-up~\cite{Piterman_2007}. In our
construction, this translation is triggered at every alternation between
probabilistic operators and temporal operators. Intuitively, each such
alternation introduces an additional exponent. The complementation of
alternating B\"uchi automata (used to express $\LTLr$) causes a quadratic
blow-up~\cite{10.1145/377978.377993}.

    We next argue that determinization can be avoided when passing from one
existential probabilistic operator to another existential probabilistic
operator. Let $B_\gamma$ be an existential reachability automaton. The formula
$\PP(\gamma \mid [A],B) > 0$ requires that the property expressed by $B_\gamma$
holds with positive probability. Equivalently, there exists a finite path of
nonzero probability along which $B_\gamma$ has an accepting run reaching an
accepting state. Hence, the existential choices can be merged, and no additional
determinization is needed. The argument for the universal-to-universal case is
symmetric.

    Consequently, the size of $D_\gamma$ is bounded by $g(k{+}1,|\alpha|+|M|)$ for an
appropriate tower function $g$, where the additional exponent accounts for the
required determinism.

Given $D_\gamma$, the probability
$\mu\bigl(\mathcal{L}(D_\gamma)\bigr)$ can be computed in time polynomial in
$|D_\gamma|$ and $|M|$~\cite{DBLP:books/daglib/0020348}. It then remains to
evaluate a Boolean combination of linear inequalities, which can be done in time linear in the size of the Boolean formula.

\end{proof}

\section{Conclusion}

We have introduced \emph{Disintegration Temporal Logic (DTL)} for probabilistic hyperproperties.
DTL reasons about probabilities conditioned on (possibly infinite) sequences of
events using the notion of measure disintegration from probability theory.
This allows the logic to express conditional independence and, consequently, hyperproperties
like probabilistic non-interference and perfect indistinguishability.
These hyperproperties belong to the \emph{linear fragment} of DTL, which is decidable in polynomial time.

Moreover, we demonstrated applications of DTL to several classes of probabilistic systems. These include stochastic environments interacting with probabilistic systems, distributional properties of Markov decision processes, and probabilistic automata on infinite words. In each case, DTL reasons about conditional probability distributions induced by finite or infinite observations. For distributional properties and probabilistic automata, DTL replaces strict quantification over blind schedulers with ``soft’’ probabilistic reasoning, yielding decidability in certain settings. Some of these properties belong to the \emph{qualitative fragment}, for which we presented a model-checking algorithm.

%%
%% Bibliography
%%

%% Please use bibtex, 

\bibliography{lipics-v2021-sample-article}

\end{document}